\def\eqref#1{equation~\ref{#1}}
\def\1{\bm{1}}
\def\eps{{\epsilon}}
\DeclareMathAlphabet{\mathsfit}{\encodingdefault}{\sfdefault}{m}{sl}
\SetMathAlphabet{\mathsfit}{bold}{\encodingdefault}{\sfdefault}{bx}{n}
\DeclareMathOperator*{\argmin}{arg\,min}
\DeclareMathOperator{\sign}{sign}
\title{Synaptic Weight Distributions Depend on the Geometry of Plasticity}
\author{%
  Roman Pogodin\thanks{Equal contribution. $^1$Dept. CS and OR; $^2$Dept. Maths and Stats; $^3$Dept. of Neurology \& Neurosurgery, School of Computer Science, Montreal Neurological Institute; $^4$Learning in Machines and Brains Program.}\\
  McGill \& Mila\\
  \texttt{roman.pogodin@mila.quebec} \\
  \And
  Jonathan Cornford$^*$\\
  McGill \& Mila\\
  \texttt{cornforj@mila.quebec}
  \And 
  Arna Ghosh\\
  McGill \&
  Mila%
  \AND
  Gauthier Gidel\\
  Université de Montréal$^1$ \& Mila%
  \And 
  Guillaume Lajoie\\
  Université de Montréal$^2$ \& Mila%
  \And
  Blake Aaron Richards\\
  McGill$^3$, Mila \& CIFAR$^4$%
}
\begin{document}

\maketitle

\begin{abstract}
A growing literature in computational neuroscience leverages gradient descent and learning algorithms that approximate it to study synaptic plasticity in the brain. However, the vast majority of this work ignores a critical underlying assumption: the choice of distance for synaptic changes -- i.e. the geometry of synaptic plasticity. Gradient descent assumes that the distance is Euclidean, but many other distances are possible, and there is no reason that biology necessarily uses Euclidean geometry. Here, using the theoretical tools provided by mirror descent, we show that the distribution of synaptic weights will depend on the geometry of synaptic plasticity. We use these results to show that experimentally-observed log-normal weight distributions found in several brain areas are not consistent with standard gradient descent (i.e. a Euclidean geometry), but rather with non-Euclidean distances. Finally, we show that it should be possible to experimentally test for different synaptic geometries by comparing synaptic weight distributions before and after learning. Overall, our work shows that the current paradigm in theoretical work on synaptic plasticity that assumes Euclidean synaptic geometry may be misguided and that it should be possible to experimentally determine the true geometry of synaptic plasticity in the brain.
\end{abstract}

\section{Introduction}

Many computational neuroscience studies use gradient descent to train models that are then compared to the brain \cite{schrimpf2018brain,nayebi2018task,yamins2014performance,bakhtiari2021functional,flesch2022orthogonal}, and many others explore how synaptic plasticity could approximate gradient descent \cite{richards2019deep}. 
One aspect of this framework that is often not explicitly considered is that in order to follow the gradient of a loss in the synaptic weight space, one must have a means of measuring distance in the synaptic space, i.e. of determining what constitutes a large versus a small change in the weights \cite{carlo2018choice}. 
In other words, whenever we build a neural network model we are committing to a synaptic geometry. 
Standard gradient descent assumes Euclidean geometry, meaning that distance in synaptic space is equivalent to the L2-norm of weight changes. 

There are however other distances that can be used. 
For example, in natural gradient descent the Kullback-Leibler divergence of the network's output distributions is used as the distance \cite{martens2020new}. 
More broadly, the theory of mirror descent provides tools for analyzing and building algorithms that use different distances in parameter space \cite{nemirovskij1983problem,beck2003mirror}, which has proven useful in a variety of applications \cite{shalev2012online,bubeck2015convex,lattimore2021mirror}. 
However, within computational neuroscience the question of synaptic geometry is often overlooked:
most models use Euclidean geometry without considering other options \cite{carlo2018choice}.
This assumption has no basis in neuroscience data, so
how could we possibly determine the synaptic geometry used by the brain?

Here, using tools from mirror descent \cite{nemirovskij1983problem,beck2003mirror}, we %
show that synaptic geometry can be determined by observing the distribution of synaptic weight changes during learning. 
Specifically, we prove that in situations where synaptic changes are relatively small the distribution of synaptic weights depends on the synaptic geometry (with mild assumptions about the loss function and the dataset).
We use this result to show that the geometry defines a dual space in which the total synaptic changes are Gaussian. As a result, if one can find a dual space in which experimentally observed synaptic changes are Gaussian, then one knows the synaptic geometry.
Applying this framework to existing neural data, which suggests that synaptic weights are log-normally distributed \cite{song2005highly,loewenstein2011multiplicative, melander2021distinct, buzsaki2014log}, we conclude that the brain is unlikely to use a Euclidean synaptic geometry.
Moreover, we show how to use our findings to make experimental predictions.
In particular, we show that it should be possible to use experimentally observed weight distributions before and after learning to rule out different candidate geometries.
Altogether, our work provides a novel theoretical insight for reasoning about the learning algorithms of the brain.

\subsection{Related work}
\looseness=-1
There is a large and growing literature on approximating gradient-based learning in the brain \cite{lillicrap2016random,liao2016important,akrout2019deep,podlaski2020biological,clark2021credit}. 
The vast majority of this work assumes Euclidean synaptic geometry \cite{carlo2018choice}. 
Notably, even if the brain does not estimate gradients directly, as long as synaptic weight updates are relatively small, then the brain's learning algorithm must be non-orthogonal to some gradient in expectation \cite{richards2023study}.
As such, our work is relevant to neuroscience regardless of the specific learning algorithm used by the brain.

\looseness=-1
Our work draws strongly from the rich and long-standing literature on mirror descent, which was originally introduced 40 years ago for convex optimization \cite{nemirovskij1983problem}.
Recent years have seen a lot of work in this area \cite{beck2003mirror,duchi2010composite,bubeck2015convex, ghai2020exponentiated,lattimore2021mirror}, especially in online optimization settings such as bandit algorithms \cite{shalev2012online, lattimore2020bandit}.
More recently, researchers have started to apply mirror descent to deep networks and have used it to try to develop better performing algorithms than gradient descent \cite{azizan2018stochastic,azizan2021stochastic}.
This work is related to natural gradient descent which also explores non-Euclidean geometries~\cite{amari1985differential,amari1998natural,ollivier2017information}.

Finally, our work is relevant to experimental neuroscience literature on synaptic weight distributions.
Using a variety of techniques, including patch clamping \cite{song2005highly} and fluorescent imaging \cite{melander2021distinct,loewenstein2011multiplicative,vardalaki2022filopodia}, neuroscientists have explored the distributions of synaptic strengths across a variety of brain regions and species.
This work has generally reported log-normal distributions in synaptic weights \cite{song2005highly,loewenstein2011multiplicative, melander2021distinct}, though, a recent study observed a more complicated bimodal distribution in log-space \cite{dorkenwald2022binary}. Moreover, the parameters of log-normal distributions observed in primary auditory cortex \cite{levy2012spatial} are close to optimal in terms of perceptron capacity \cite{zhong2022theory}. 
Our work connects this experimental literature to our theoretical understanding of learning in the brain -- it makes it possible to test theories of synaptic geometry using weight distribution data.

\section{Mirror descent framework}

In order to derive our core results we will rely on tools from mirror descent.
To introduce it, we first revisit gradient descent. 
Assume we have a loss function $l(\bb w)$.
In gradient descent, we choose the next point in weight space $\w^{t+1}$ from the current point $\w^t$ by minimizing a linearized version of $l(\bb w)$ with a penalty for taking large steps in weight space (where the strength of the penalty is controlled by the learning rate $\eta$).
Importantly, penalizing large steps in weight space necessitates a distance function.
If we choose the squared $L^2$-norm as our distance function we obtain: 
\begin{equation}
    \w^{t+1} = \argmin_{\w} g(\w, \w^t)\,,  \quad g(\w, \w^t)= l(\w^t) + \nabla l(\bb w^t)\T (\w-\w^t) + \frac{1}{2\,\eta}\|\w - \w^t\|^2_2\,.
    \label{eq:gd_argmin}
\end{equation}
This choice of distance function results in the standard gradient descent update when we solve the unconstrained optimization problem in \cref{eq:gd_argmin}:
\begin{equation}
    \w^{t+1} = \w^t - \eta\,\nabla l(\bb w^t)\,.%
    \label{eq:gd}
\end{equation}
In mirror descent, we consider a more general set of distance functions provided by the Bregman divergence $D_{\phi}(\w, \widetilde\w)$ \cite{bregman1967relaxation}.
For a strictly convex function, called a \textit{potential}, $\phi(\w)$:
\begin{equation}
    D_{\phi}(\w, \widetilde\w) = \phi(\w) - \phi(\widetilde\w) - \nabla\phi(\widetilde\w)\T(\w - \widetilde\w)\,.
    \label{eq:bregman}
\end{equation}
Re-writing \cref{eq:gd_argmin} into this more general form we get:
\begin{equation}
    \w^{t+1} = \argmin_{\w}g_\phi(\w, \w^t)\,,  \quad g_\phi(\w, \w^t)= l(\w^t) + \nabla l(\bb w^t)\T (\w-\w^t)+ \frac{1}{\eta}D_{\phi}(\w, \w^t)\,.
    \label{eq:md_argmin}
\end{equation}

\begin{wrapfigure}{r}{0.36\textwidth} %
  \centering
  \vspace{-12pt}
  \includegraphics[width=0.85\linewidth]{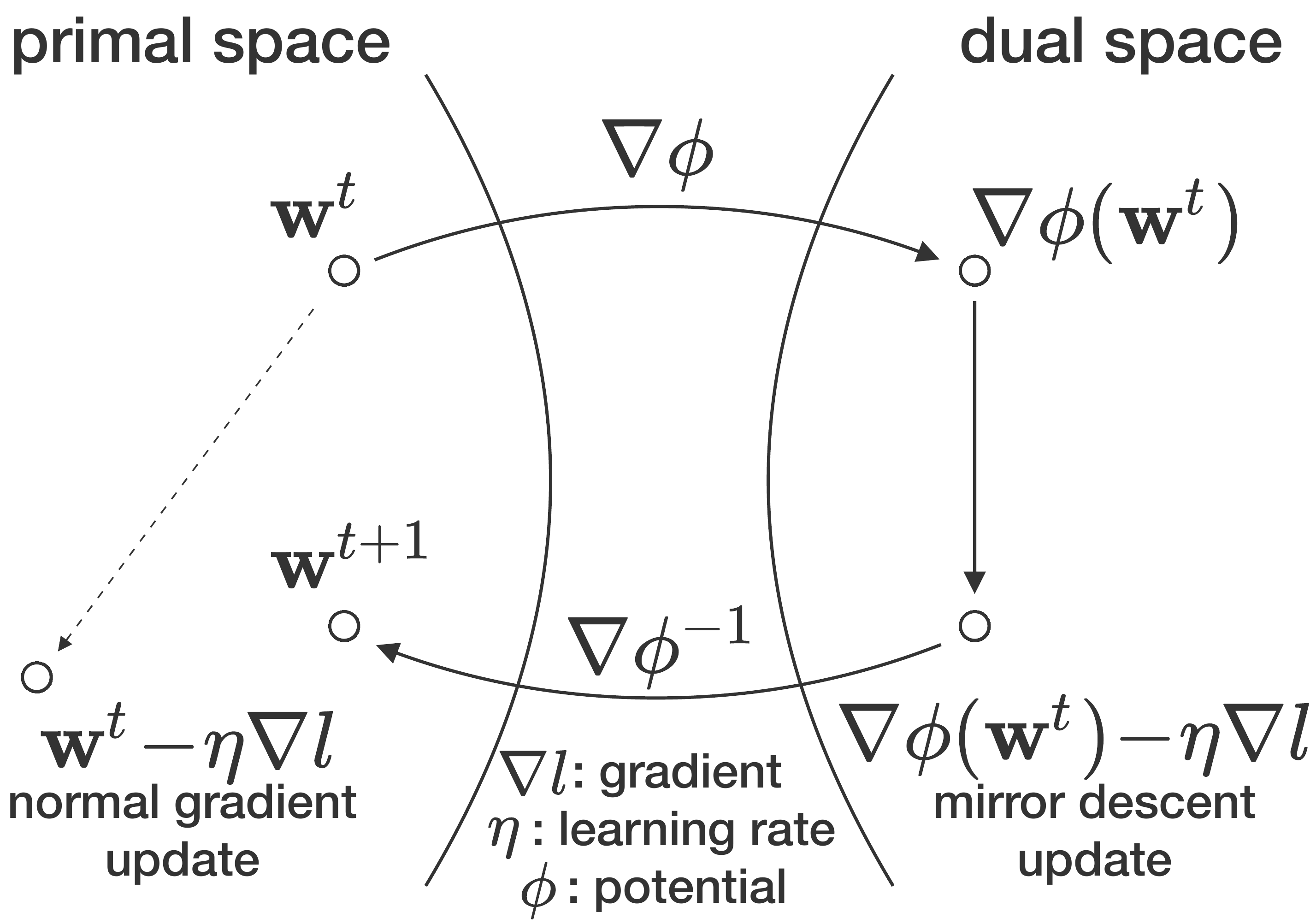}
  \caption{Mirror descent dynamics.}
  \vspace{14pt}
  \label{fig:md_visualization}
\end{wrapfigure}

We can express the update in closed form using the gradient of the potential:
\begin{equation}
    \nabla\phi(\w^{t+1}) = \nabla\phi(\w^t) - \eta\,\nabla l(\bb w^t)\,.
    \label{eq:dual_update}
\end{equation}
As $\phi$ is strictly convex, $\nabla\phi$ is invertible, so:
\begin{equation}
    \w^{t+1} = \nabla\phi^{-1}\brackets{\nabla\phi(\w^t) - \eta\,\nabla l(\bb w^t)}\,.%
    \label{eq:md_weight_update}
\end{equation}
The update moves $\w$ from the ``primal'' space to the ``dual'' via $\nabla\phi(\w)$, performs a regular gradient descent update in the dual space, and then projects back to the primal space (\cref{fig:md_visualization}).
We consider $p$-norms $\phi(\w)\!=\!\frac{1}{p}\|\w\|^p_p$ and negative entropy $\phi(\w) \!=\! \sum_i\! |w_i|\log|w_i|$. The former for $p\!=\!2$ recovers standard gradient descent. The latter leads to the exponentiated gradient algorithm \cite{kivinen1997exponentiated} ($\odot$ denotes element-wise product):
\begin{equation}
    \w^{t+1} = \bb w^t\odot\,e^{\brackets{-\eta\nabla l(\bb w^t)\,\odot\,\sign{\w^t}}}\,,
    \label{eq:eg_explicit_update}
\end{equation}
\cref{fig:example_potentials} shows how different potentials change problem geometry and, therefore, solutions.

\Cref{eq:dual_update} already predicts our main result: if updates in the dual space are approximately independent, or contain noise, over time the sum of the updates will look Gaussian by the central limit theorem (since they sum linearly in the dual space). In fact, \cite{loewenstein2011multiplicative} showed that synaptic weights (of rodent auditory cortex) stay log-normal over time (implying Gaussian changes in the log space) and follow multiplicative dynamics in synaptic weight space. This is consistent with the negative entropy potential and in particular \cref{eq:eg_explicit_update}. This data, however, wasn’t collected when training mice on a specific task. It is not clear if learning-driven dynamics follows the same update geometry. Here, we develop a theory for distinguishing synaptic geometries during training.

\begin{wrapfigure}{r}{0.5\textwidth} %
  \centering
  \vspace{-12pt}
  \includegraphics[width=0.5\textwidth]{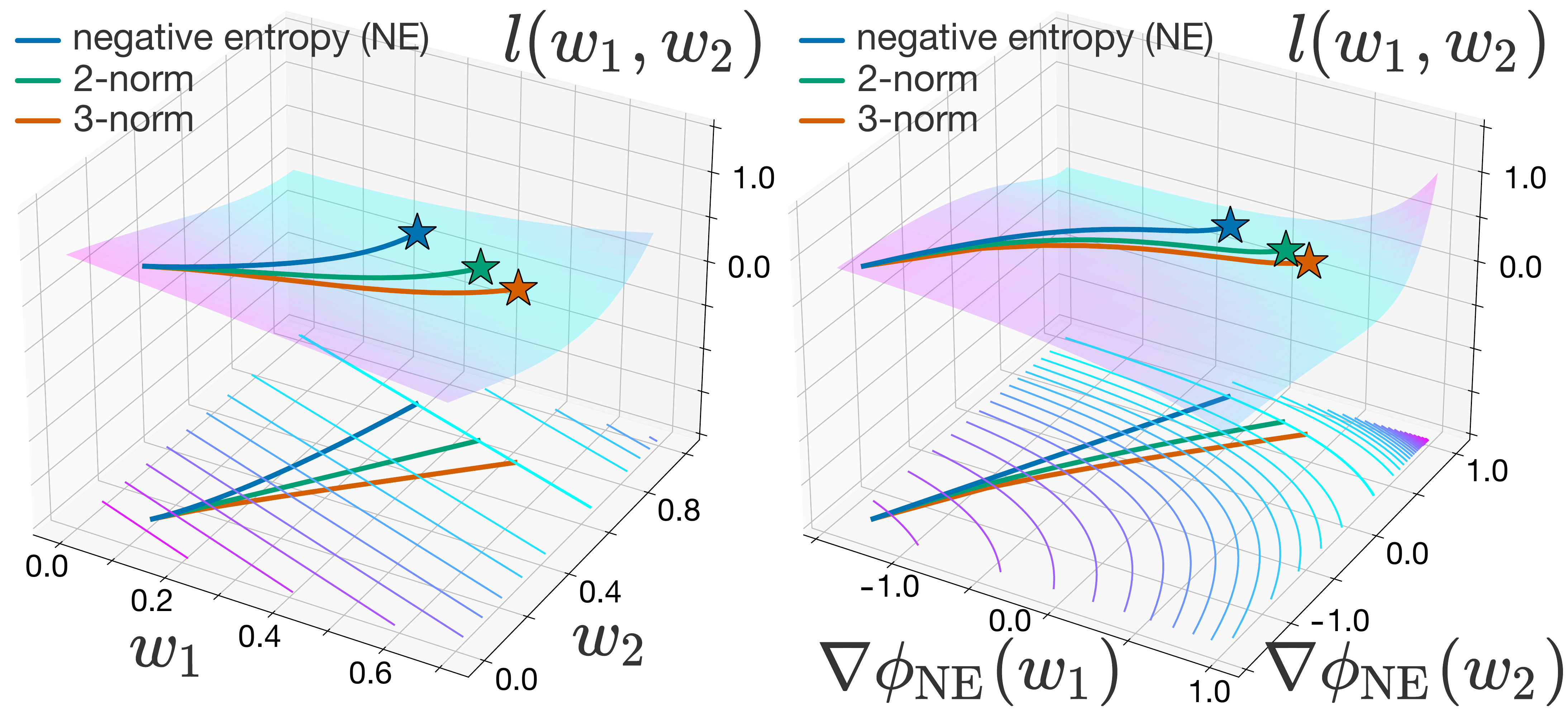}
  \caption{\cref{eq:md_argmin} dynamics for $l(w_1, w_2)\!=\!(\frac{1}{2}w_1\!+\!w_2\!-\!1)e^{\frac{1}{2}w_1 + w_2}$. Blue: negative entropy (NE, \cref{eq:eg_explicit_update}); green: gradient descent/2-norm (\cref{eq:gd}); orange: 3-norm. Left: loss surface (and level sets shown below it) and dynamics in the regular $(w_1, w_2)$ coordinates. Right: same, but in the $(\nabla\phi_{\mathrm{NE}}(w_1), \nabla\phi_{\mathrm{NE}}(w_2))$ coordinates.}
  \label{fig:example_potentials}
  \vspace{-12pt}
\end{wrapfigure}
\cref{eq:md_weight_update} allows us to understand synaptic weight updates via two independent terms: a credit signal and an intrinsic synaptic geometry. 
The credit signal is the gradient $\nabla l(\bb w^t)$ -- or a more biologically plausible approximation. 
The synaptic geometry is determined by the potential $\phi$; it captures how the credit signal $\nabla l(\bb w^t)$ is used to update synaptic weights.
Note however, we are not necessarily proposing that this separation is reflected in underlying neurobiological processes, just that it can be used as a model for understanding weight updates.
Most studies of biologically plausible deep learning \cite{lillicrap2016random,liao2016important,akrout2019deep,podlaski2020biological,clark2021credit} use Euclidean distance by default, effectively ignoring the second term (although see \cite{carlo2018choice} and \cite{schwarz2021powerpropagation}%
). 
Here we concentrate on the second term, and derive results that are independent of the first term. 
Therefore, and crucially, previous theoretical studies that derive biologically plausible estimates of credit signals are fully compatible with our work.

\subsection{Implicit bias in mirror descent}

Gradient descent in overparametrized linear regression finds the minimum norm solution: a set of weights $\w$ closest to the initial weights $\w^0$, according to the $2$-norm distance~\cite{gunasekar2017implicit,zhang2021understanding}. 
This bias towards small changes is referred to as the \textit{implicit bias} of the algorithm.
Recently, a similar result was obtained for mirror descent, wherein the implicit bias will depend on the potential $\phi$ \cite{gunasekar2018characterizing}. 
Here we discuss this result and its applicability to gradient-based learning in deep networks.

\textbf{Linear regression.}\ \ 
To begin, consider a linear problem: predict $y^n$ from $\x^n$ as $\widehat y^{\,n} \!=\! (\x^n)\T\w$. 
For $N$ points and $D$-dimensional weights, we write this as $\bb y \!=\! \bb X\w$. 
\cite{gunasekar2018characterizing} showed that for losses $l(\widehat y, y)$ with a unique finite root ($l(\widehat y, y)\!\rightarrow\! 0$ iff $\widehat y\!\rightarrow\! y$), the mirror descent solution $\w^\infty$ (assuming it exists) is the closest $\w$ to the initial weights $\w^0$ w.r.t. the Bregman divergence $D_\phi$:
\begin{equation}
    \w^\infty = \argmin_{\w:\ \bb y = \bb X\T\w} D_\phi(\w, \w^0)\,.
    \label{eq:md_lin_solution}
\end{equation}
An example of such a setup is the MSE loss $l(\widehat y,y)=(\widehat y - y)^2/2$ and an overparametrized model with $D \geq N$. For a fixed $y$, this gives rise to an anti-Hebbian gradient $-(\hat y - y)\x$.

\textbf{Deep networks.}\ \ 
A crucial condition is hidden in the proof of the above result: the gradient updates $\nabla l(\widehat y^{\,n}, y^n)$ have to span the space of $\x^n$. 
For a linear model $\widehat y = \x\T\w$, this is naturally satisfied as $\nabla l(\widehat y^{\,n}, y^n) = \parderiv{l(\widehat y^{\,n}, y^n)}{\widehat y^{\,n}}\,\x^n$.
For deep networks, \cite{azizan2021stochastic} showed a similar to \cref{eq:md_lin_solution} result for solutions close to the initial weights, and empirically noted that different potentials result in different final weights.
Thus, we can make a similar assumption: if the solution is close to the initial weights, then we can linearize a given deep network $f(\w, \x)$ around the initial weights $\w^0$:
\begin{equation}
    f^{\mathrm{lin}}(\w, \x, \w^0)=f(\w^0, \x) + \nabla f(\w^0, \x)\T (\w - \w^0)\,.
\end{equation}
This function is linear in the weights $\w$, but not in the inputs $\x$.
Intuitively, if $\w$ doesn't change a lot from $\w^0$, a linearized network should behave similarly to the original one with respect to weight changes. The linear approximation moves us back to a setting akin to linear regression except that the gradients now span $\nabla f(\w^0, \x_i)$ rather than $\x_i$. 
Thus, for linearized networks, \cref{eq:md_lin_solution} becomes:
\begin{equation}
    \w^\infty = \argmin_{\w:\ \bb y = f^{\mathrm{lin}}(\w, \x, \w^0)} D_\phi(\w, \w^0)\,.
    \label{eq:md_linearized_solution}
\end{equation}

One may wonder whether it is appropriate to assume that a solution exists near the initial weights when considering real brains.
There are two reasons that we argue this is appropriate in our context.
First, in biology, animals are never true ``blank slates'', they instead come with both life experience and evolutionary priors.
As such, it is not unreasonable to think that in real brains the solution in synaptic weight space often sits close to the initial weights, and moreover, there would be strong evolutionary pressure for this to be so.
Second, neural tangent kernel (NTK; \cite{jacot2018neural}) theory shows that, with some assumptions, infinite width networks are identical to their linearized versions, and for finite width networks, the linear approximation gets better as width increases \cite{lee2019wide} (although learning dynamics don't always follow NTK theory, see e.g. \cite{bordelon2022self}).
Thus, for very large networks (such as the mammalian brain) it is not unreasonable to think that a linear approximation may be appropriate.

Below, we will develop theory for \cref{eq:md_lin_solution} (linear regression), and then experimentally show that the results hold for fine-tuning of deep networks (which are close to linearized networks in \cref{eq:md_linearized_solution}).

\section{Weight distributions in mirror descent}

Our goal in this section is to derive a solution for the distribution of the final synaptic weights $\w^\infty$ as a function of the potential $\phi$.
To do this, we begin by noting that \cref{eq:md_lin_solution} can be solved using Lagrange multipliers $\lambda\in\RR^N$ that enforce $\bb y=\bb X\w$; the Lagrangian is:
\begin{align}
    L(\w, \lambda) = D_\phi(\w,\w_0) + (\bb y - \bb X\,\w)\T\lambda\,.
\end{align}
Solve for the minimum of $L$ by differentiating it w.r.t. $w$ and setting it to 0:
\begin{align}
    \parderiv{L(\w, \lambda)}{\w} \,&= \nabla\phi(\w) -  \nabla\phi(\w_0) - \bb X\T\lambda = 0\,.
\end{align}
Since $\nabla\phi^{-1}$ is the inverse of $\nabla\phi$, we obtain:
\begin{equation}
    \w^\infty = \nabla\phi^{-1}\brackets{\nabla\phi(\w^0) + \bb X\T\lambda}\,,
    \label{eq:w_ing:x_lambda}
\end{equation}
where $\lambda$ will be chosen to satisfy $\bb y = \bb X\w^\infty$. %

To obtain a closed-form solution, we can linearize \cref{eq:w_ing:x_lambda} around $\nabla\phi(\w^0)$ using the fact that the mapping with the potential is invertible, so $\nabla\phi^{-1}\brackets{\nabla\phi(\w^0)}=\w^0$:
\begin{equation}
    \w^\infty \approx \nabla\phi^{-1}\brackets{\nabla\phi(\w^0)} + \nabla^2\phi^{-1}\brackets{\nabla\phi(\w^0)}\bb X\T\lambda =\  \w^0 + \bb H_{\phi^{-1}} \bb X\T\lambda\,,
    \label{eq:winf_approx}
\end{equation}

where we denoted $\bb H_{\phi^{-1}}=\nabla^2\phi^{-1}\brackets{\nabla\phi(\w^0)}$. For potentials that couple weights together, the Hessian will be non-diagonal. However, we use ``local'' potentials in which each entry $i$ of $\nabla\phi(\w)$ depends only on $w_i$ (i.e. $\phi(\w)=\sum_i f(w_i)$ for some function $f$). For such potentials, the Hessian becomes diagonal. Since the mapping w.r.t. the potential is invertible, we can use the inverse function theorem to compute the Hessian $\bb H_{\phi^{-1}}$: $\bb H_{\phi^{-1}} = \nabla^2\phi^{-1}\brackets{\bb z} = \brackets{\nabla^2\phi(\w^0)}^{-1}$ for $\bb z=\nabla\phi(\w^0)$, assuming $\phi$ is twice continuously differentiable and $\bb H_{\phi^{-1}}$ is non-singular.

Since $\bb H_{\phi^{-1}}$ is ultimately a function of $\phi$, we have the structure of our solution.
However, we need to solve for $\lambda$ in order to satisfy $\bb y = \bb X\w^\infty$. 
Assuming that $\bb H_{\phi^{-1}}$ is positive-definite and that $\bb X\in\RR^{N\times D}$ is rank $N$ ($D\geq N$) for the sake of invertibility, $\lambda$ can be approximated by $\widehat\lambda$ as:
\begin{equation}
    \lambda\approx\widehat\lambda = \brackets{\bb X\bb H_{\phi^{-1}}\bb X\T}^{-1}(\bb y - \bb X\w^0)\,.
    \label{eq:lambda_approx}
\end{equation}
With \cref{eq:lambda_approx}, we can now state the main result. Intuitively, what this result will show is that (1) we can know the shape of the distribution of the final weights, (2) that shape is independent of the loss function and the dataset, but not the initial weights. More formally, this is stated as:
\begin{theorem}[Informal]
Consider $N$ i.i.d. samples $y^n,\x^n$, such that: $\x^n\!\in\!\RR^D$ are zero-mean and bounded; \underline{pairwise} correlations $c_{ij}\!=\!\expect\,x^n_ix^n_j$ and $c_{ij}'\!=\!\mathrm{Cov}((x^n_i)^2, (x^n_j)^2)$ between entries of a single $\x^n$ decay quickly enough so $\sum_{j=1}^\infty |c_{ij}| \!\leq\! \textrm{const}$ and $\sum_{j=1}^\infty |c_{ij}'| \!\leq\! \textrm{const}$ for all $i$;
$y^n\!=\!(\x^n)\T\w^*$; the teacher weights $\w^*$ and the initial weights $\w^0$ have zero-mean, $O(1/D)$ variance, i.i.d. entries, and finite 8th moment; $\nabla^2\phi^{-1}(w^0_i)$ has finite 1st and 2nd moments.%

Then for $\widehat\lambda = \brackets{\bb X\bb H_{\phi^{-1}}\bb X\T}^{-1}(\bb y - \bb X\w^0)$, individual entries of $\bb X\T\widehat\lambda$ converge (in distribution) to a Gaussian  with a constant variance $\sigma_\lambda^2$ as $D,\,N\rightarrow\infty$ with $N=o(D^{1/(5+\delta)})$ (for any $\delta>0$):
\begin{equation}
    \frac{D\,h}{\sqrt{N}}\brackets{\bb X\T\widehat\lambda}_i\ \longrightarrow_d\ \NN(0, \sigma_\lambda^2)\,,
\end{equation}
where $h=\expect\left[\nabla^2\phi^{-1}(w^0_i)\right]$ (a scaling factor that is identical for all entries, which is \underline{not} equivalent to $\bb H_{\phi^{-1}}$), and $\sigma_\lambda^2$ depends on the distributions of inputs and initial weights. The whole vector $\bb X\T\lambda$ converges to a Gaussian process (over discrete indices) with weak correlations for distant points. 
\label{theorem:gaussian_process_main_text}
\end{theorem}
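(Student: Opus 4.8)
Write $\Delta\w:=\w^*-\w^0$; since $\bb y=\bb X\w^*$ we have $\bb y-\bb X\w^0=\bb X\Delta\w$, so $\bb X\T\widehat\lambda=\bb X\T\brackets{\bb X\bb H_{\phi^{-1}}\bb X\T}^{-1}\bb X\Delta\w$. The plan has four parts: (i) show the weighted Gram matrix $\bb G:=\bb X\bb H_{\phi^{-1}}\bb X\T\in\RR^{N\times N}$ concentrates around a scalar multiple of $\bb I_N$ and Neumann-expand $\bb G^{-1}$; (ii) keep only the leading term and reduce it, up to negligible pieces, to a sum over the $N$ samples of i.i.d.\ mean-zero contributions; (iii) apply a Lyapunov CLT to that sum and de-condition; (iv) bound every discarded term. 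Throughout I condition on the weights $(\w^*,\w^0)$: given them the rows $\x^n$ are i.i.d., $\bb H_{\phi^{-1}}$ is the fixed diagonal matrix with entries $H_i=\nabla^2\phi^{-1}(w^0_i)$, and $\Delta\w$ is fixed; write $\bar\sigma^2:=\tfrac1D\sum_i\expect(x^n_i)^2$ and $\tau_k^2:=D\,\Var(\Delta w_k)=O(1)$.

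For (i): off the diagonal, $\bb G_{mn}=\sum_i x^m_i x^n_i H_i$ has zero $\x$-mean (sample independence) and $\x$-variance $O(D)$, using $\sum_{i'}|c_{ii'}|\le\mathrm{const}$ and the finite second moment of $H$; on the diagonal, $\bb G_{nn}$ has $\x$-mean $\sum_i c_{ii}H_i$ and $\x$-variance $O(D)$ (here $\sum_{i'}|c_{ii'}'|\le\mathrm{const}$ enters), and a law of large numbers over the i.i.d.\ $H_i$ replaces $\sum_i c_{ii}H_i$ by $h\bar\sigma^2 D$ up to $O(\sqrt D)$ (this is why $h$, not $\bb H_{\phi^{-1}}$, appears). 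Hence $\bb G=h\bar\sigma^2 D\,(\bb I_N+\bb E)$ with $\bb E$ entrywise $O(1/\sqrt D)$, so $\|\bb E\|_{\mathrm{op}}\le\|\bb E\|_F=O(N/\sqrt D)\to0$ and $\bb G^{-1}=(h\bar\sigma^2 D)^{-1}\sum_{p\ge0}(-\bb E)^p$. For (ii): the $p=0$ term gives $(\bb X\T\widehat\lambda)_j=(h\bar\sigma^2 D)^{-1}\sum_n x^n_j\langle\x^n,\Delta\w\rangle+(\text{errors})$; peeling off the self-term $x^n_j\cdot x^n_j\Delta w_j$ and the $\x$-mean $\sum_{k\neq j}c_{jk}\Delta w_k$ — each of size $O(\sqrt{N/D})$ after multiplying by $Dh/\sqrt N$, by $\Delta w_k=O(1/\sqrt D)$, zero-mean, and $\sum_k|c_{jk}|\le\mathrm{const}$ — leaves $(h\bar\sigma^2 D)^{-1}\sum_n S^{(j)}_n$, where $S^{(j)}_n:=x^n_j\sum_{k\neq j}x^n_k\Delta w_k-\expect_{\x}[\,\cdot\,]$ is, given the weights, i.i.d.\ over $n$ and mean zero.

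For (iii): the $8$th-moment hypotheses on $\x$ and the weights make $\expect|S^{(j)}_n|^{2+\varepsilon}$ finite, so the Lyapunov condition holds and $N^{-1/2}\sum_n S^{(j)}_n\Rightarrow\NN(0,\Var(S^{(j)}_1\mid\w))$. The conditional variance is a fixed quadratic form in $\Delta\w$; its $\w$-mean converges and its $\w$-variance is $O(1/D)$ — using $\sum_{i'}|c_{ii'}'|\le\mathrm{const}$ and finite fourth moments of the weights — so it concentrates to a deterministic $v^2$ fixed by the input correlations and weight variances. Since this limit is $\w$-free, de-conditioning is immediate, and combining with $\tfrac{Dh}{\sqrt N}(h\bar\sigma^2 D)^{-1}=(\bar\sigma^2\sqrt N)^{-1}$ gives $\tfrac{Dh}{\sqrt N}(\bb X\T\widehat\lambda)_j\Rightarrow\NN(0,\sigma_\lambda^2)$ with $\sigma_\lambda^2=v^2/\bar\sigma^4$. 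The process statement follows the same route: for indices $j_1,\dots,j_m$ the Cramér--Wold device applied to $\sum_\ell a_\ell(\bb X\T\widehat\lambda)_{j_\ell}$ reduces to a CLT for the i.i.d.\ sum $\sum_n\sum_\ell a_\ell S^{(j_\ell)}_n$, whence joint asymptotic normality; the limiting covariance is a common factor times $c_{jj'}$ (the genuinely fourth-order pieces $\expect[x^n_j x^n_{j'}((x^n_k)^2-c_{kk})]$ being negligible under the decay hypotheses), so it $\to0$ as $|j-j'|\to\infty$ because $c_{jj'}\to0$ — the claimed weak correlations at a distance.

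Part (iv) is the main obstacle. Every discarded term — the Neumann corrections $(h\bar\sigma^2 D)^{-1}(\bb X\T\bb E^p\bb X\Delta\w)_j$ for $p\ge1$ and the self/mean terms — must be shown to be $o_P(\sqrt N/D)$, i.e.\ $o_P(1)$ after normalization. Bounding by Cauchy--Schwarz in the relevant Euclidean norms ($\|\bb E\|_{\mathrm{op}}=O(N/\sqrt D)$, $\|\bb X\Delta\w\|_2=O(\sqrt N)$) and controlling the residual quadratic/quartic forms in $\x$ via Markov's inequality with the $8$th-moment bounds and the summability of $c_{ij},c_{ij}'$, the dominant contribution scales like $N^3/D^{3/2}$ — it is quartic in the data, so only a second-moment (Chebyshev) bound is available with the assumed moments — and requiring $\tfrac{Dh}{\sqrt N}\cdot N^3/D^{3/2}\to0$ is exactly $N^5=o(D)$, hence $N=o(D^{1/(5+\delta)})$ for every $\delta>0$. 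I expect this uniform error control, not the CLT itself, to be the hard part: it is what forces the eighth-moment assumptions and what pins down the restrictive admissible growth of $N$. A minor further subtlety — the randomness of $\bb H_{\phi^{-1}}$ while claiming $\bb G\approx h\bar\sigma^2 D\,\bb I_N$ — is handled by the two-stage ($\x$ then $\w^0$) concentration of part (i).
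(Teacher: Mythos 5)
Your proposal is correct and follows the same overall architecture as the paper's proof — concentration of the weighted Gram matrix $\bb X\bb H_{\phi^{-1}}\bb X\T$ around a scalar multiple of the identity (the paper's \cref{lemma:xhx_to_identity}), a separate argument that the deviation from identity is negligible after normalization, which is exactly where the $N^{5}\lesssim D$ scaling enters (\cref{lemma:gaussian_proj}), a CLT for the leading sum over samples (\cref{app:lemma:clt}), and finite-dimensional projections (your Cram\'er--Wold step, the paper's $\mathcal{A},\alpha$-projections) to upgrade to the process statement (\cref{app:theorem:convergencd_to_gp}). The one place you genuinely diverge is the CLT itself, which is also the most technical part of the paper. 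Because $\w^*$ and $\w^0$ are shared across all $N$ samples, the summands $R_{Ni}$ are dependent; the paper keeps them unconditional, observes they form an infinitely exchangeable array, and verifies the moment conditions $\expect R_{N1}R_{N2}=o(1/N)$, $\expect R_{N1}^2R_{N2}^2\rightarrow\sigma_R^4$, $\expect|R_{N1}|^3=o(\sqrt{N})$ of the exchangeable-array CLT of Blum et al. You instead condition on $(\w^*,\w^0)$ to make the summands i.i.d., apply Lyapunov, and de-condition by showing the conditional variance concentrates. Both routes require essentially the same computations — your variance-concentration step is, in disguise, the paper's bound $\expect R_{Ni}^2R_{Nj}^2-\sigma_R^4=O(1/D)$ — but yours trades a less standard theorem for a de-conditioning step that must be stated with care (the conditional Lyapunov ratio has to be controlled on a weight set of probability tending to one, and the passage to the unconditional limit goes through dominated convergence of conditional characteristic functions over a triangular array of weights, since $D$ grows with $N$). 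Two minor observations: your bound $\|\bb E\|_F=O_P(N/\sqrt{D})$ via the expected squared Frobenius norm is a factor of $N$ sharper than the paper's union-bound estimate, which suggests (consistent with the paper's own remark) that the admissible growth of $N$ could likely be improved; and your part~(iv) is the right place to locate the difficulty, but as written it is a plan rather than a proof — the quartic-in-$\bb X$ error terms are exactly what the paper's \cref{lemma:gaussian_proj} spends its effort controlling.
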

\begin{proof}[Proof sketch.] First, we show that scaled $\brackets{\bb X\bb H_{\phi^{-1}}\bb X\T}^{-1}$ behaves like an identity for large $N,D$. Then, we show that $(\bb X\T(\bb y - \bb X\w^0))_i$ is a sum of $N$ exchangeable variables (i.e. any permutation of these points has the same distribution) that satisfy conditions for a central limit theorem for such variables. Finally, we convert this result into a convergence to a Gaussian process.
\end{proof}
We postpone the full proof, along with a version of the theorem for generic labels (without the teacher weights $\w^*$), to \cref{app:sec:theory}.
To unpack this theorem a bit more for the reader, combined with \cref{eq:w_ing:x_lambda}, the theorem shows that given some initial weights, $\w^0$, the distribution of $\w^\infty$ will depend on that initial distribution plus a term that converges to a Gaussian (in the dual space), as long as the loss has a unique finite root (defined before \cref{eq:md_lin_solution}) and the dataset correlations are small ($c_{ij},c_{ij}'$ in \cref{theorem:gaussian_process_main_text}).
As such, the distribution of the solution weights will depend on the initial weights and the potential, but not on the loss or data.
To clarify the assumptions used, we need the pairwise correlations $c_{ij},\,c_{ij}'$ between distant inputs to be small.
This is a common feature in natural sensory data \cite{ruderman1994statistics} and neural activity \cite{schulz2015five}.
Finally, we note that the $N$ scaling with $D$ results from a generic large deviations bound and could likely be improved under some assumptions. In \cref{sec:experiments}, we experimentally verify that $N=D^{0.5}$ and $N=D^{0.75}$ result in Gaussian behavior in the tested settings.

\textbf{Practical usage of our result.}\ \ 
Since we expect the weight change in the dual space to look Gaussian, we also expect the weight change in the dual space of a wrong potential to \textit{not} be Gaussian. If we denote the true potential $\nabla\phi=f$ and another potential $\nabla\phi'\!=\!f'$, our result states that the change in the dual space is Gaussian: $f(\w^\infty) \!-\! f(\w^0) \!=\! \xi$, where $\xi$ is Gaussian. If we use the wrong potential, i.e. $f'$, we get $f'(\w^\infty)\!=\!f'(f^{-1}(f(\w^0) + \xi))$. If $f$ and $f'$ are similar, then we $f'(f^{-1}(\cdot))$ is approximately an identity, so we would see a (slightly less than for $f$) Gaussian change. If not, the change will be non-Gaussian due to the nonlinear $f'(f^{-1}(\cdot))$ transformation. Therefore we will be able to determine the potential used for training by finding the one with the most Gaussian change. 

\textbf{Applicability to other learning rules.}\ \ 
Our result in \cref{theorem:gaussian_process_main_text} is made possible by two components: the structure of mirror descent solutions (see \cref{eq:md_lin_solution}) and the Gaussian behavior of large sums. However, the mirror descent framework can be applied to any learning rule, if we replace the gradient w.r.t. the loss in \cref{eq:dual_update} with a generic error term $\bb g^t$: $\nabla\phi(\w^{t+1}) = \nabla\phi(\w^t) - \eta\,\bb g^t$.
This way, for small and approximately independent (e.g. exchangeable) error terms $g^t$ their total contribution to the weight change would also follow the central limit theorem, resulting in a small Gaussian weight change. However, for generic error terms we cannot rely on \cref{eq:md_lin_solution} and can only provide generic conditions for Guassian behavior. Therefore, while our work leverages gradient-based optimization, the intuition we present applies more broadly.

\textbf{Theorem applicability in lazy and rich regimes.}\ \ 
The variance of the Gaussian term in \cref{theorem:gaussian_process_main_text} depends on the choice of the potential (via $h$). 
This means that the magnitude of the change from $\w^0$ to $\w^\infty$ will depend on $\phi$. 
Moreover, if the ``learned'' term $\bb X\T\lambda$ in \cref{eq:w_ing:x_lambda} is not small compared to $\nabla\phi(\w^0)$, our theory will not be applicable since the linearization will not be valid. Therefore, the applicability of our theorem is related to the question of ``rich'' versus ``lazy'' regimes of learning. Our theory is valid only in the lazy regime, in which learning is mostly linear~\cite{chizat2019lazy}. 

However, whether we are in the rich or lazy regime turns out to depend on the potential. Assume for simplicity that all initial weights are the same and positive: $w_d^0\!=\! w^0\!=\! \alpha / \sqrt{D}$. The standard lazy regime corresponds to large weights with $\alpha\!=\!1$, and the standard rich regime corresponds to small weights with $\alpha\!=\!1/\sqrt{D}$. We can show (see \cref{app:rich_lazy} for a derivation) that for $p$-norms, the dual weights are asymptotically larger than the weight changes as long as $\alpha\!\gg\!\sqrt{N/D}$ (for $N$ data points and $D$ weights). For negative entropy, this bound is more loose: $\alpha\!\gg\!\sqrt{N/(D\log D)}$. Therefore, for the same weight initialization and dataset size, negative entropy would typically produce smaller updates. For the standard for deep networks initialization with $\alpha\!\approx\! 1$, our theory should be applicable to datasets with $N \!\ll\! D$ (e.g. $N\!=\!D^{0.5})$. %

\section{Experiments}
\label{sec:experiments}

Here we empirically verify our theory under conditions relevant for neuroscientific experiments. We use PyTorch \cite{paszke2019pytorch} and FFCV library for fast data loading \cite{leclerc2022ffcv}. The experiments were performed on a local cluster with A100 NVIDIA GPUs. Experimental details are provided in \cref{app:sec:experiments}. %
Code is available at \href{https://github.com/romanpogodin/synaptic-weight-distr}{github.com/romanpogodin/synaptic-weight-distr}.

\subsection{Linear regression}
\label{sec:linreg}

We begin by testing whether the theorem holds in the case of linear regression, where we know that \cref{eq:md_lin_solution} holds.
However, for experimental applicability, we want to verify that the histogram of $\xi_i$ over entries $i$ is Gaussian even when we draw it from a single network (rather than several initializations of a network), which would be the condition that holds in real neural experiments. 
As well, we want to verify that the change of weights $\xi_i \!=\! \nabla\phi(w^\infty_i) \!-\! \nabla\phi(w^0_i)$ is smaller than $\nabla\phi(w^0_i)$.

\begin{figure}[t]
    \centering
    \includegraphics[width=\textwidth]{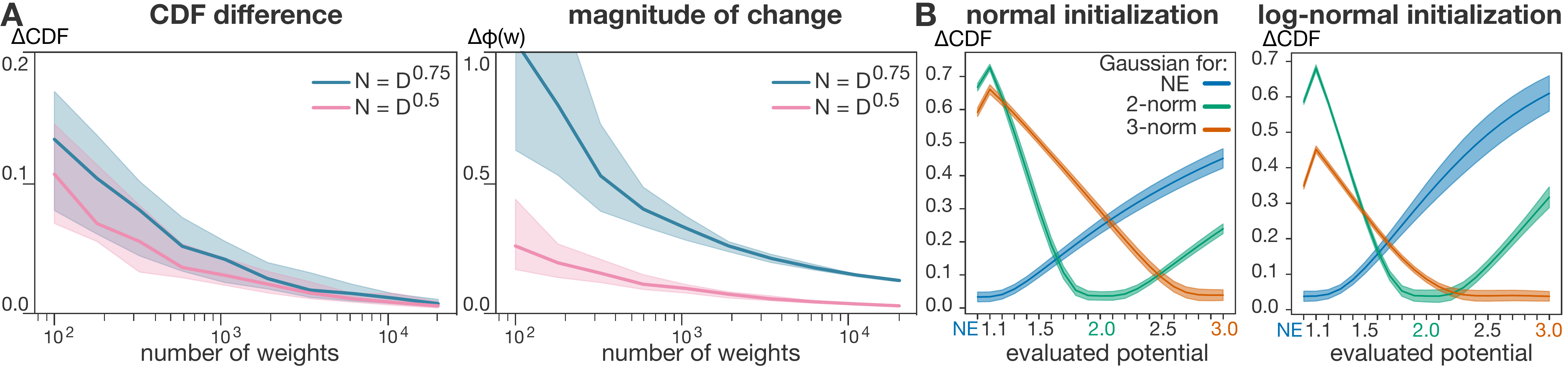}
    \caption{\textbf{A.} Linear regression solutions for negative entropy (NE) and fast correlation decay ($c\!=\!2$). Left: integral of absolute CDF difference ($\Delta\mathrm{CDF}$) between normalized uncentered weights and $\NN(0, 1)$. Right: magnitude of weight changes relative to the initial weights in the dual space ($\Delta\phi$). Solid line: median over 30 seeds; shaded area: 5/95\% percentiles; pink: $N\!=\!D^{0.5}$; blue: $N\!=\!D^{0.75}$. \textbf{B.} $\Delta\mathrm{CDF}$ for Gaussian (left) and log-normal (right) weight initializations and a Gaussian addition w.r.t. $\phi$, evaluated on another potential $\phi'$ (e.g. blue lines are sampled for NE but evaluated on every potential). Solid line: mean over 30 seeds; shaded areas: mean $\pm$ standard deviation.}
        \label{fig:linreg_convergence}
\end{figure}
For each potential, we draw $x_i\sim\NN(0, 1) + \mathcal{U}(-0.5, 0.5)$ such that the correlation structure is as follows: $\expect\,x_i\,x_j = (-1)^{|i-j|} / (1 + |i-j|^{c})$ for $c=1,\,2$. 
Note that $c=1$ is a more relaxed condition than the correlation assumption used in proving \cref{theorem:gaussian_process_main_text}. 
This is, therefore, a test of whether the theorem applies more broadly. 
In our experiments here, the initial weights are drawn from $w_i\sim\NN(0, 1/D)$ for width $D$, and labels are drawn as $y=\pm1$ (equal probability). 
We then optimize $(y - \sum_{i}x_iw_i)^2$ for networks of different widths $D$ and $N=D^{0.5},\, D^{0.75}$. 
Strictly speaking only $N=o(D^{1/(5+\delta)})$ satisfies \cref{theorem:gaussian_process_main_text}, so again, we are testing whether the theorem holds empirically when we relax our assumptions. 
We measure two things: the magnitude of weight changes in the dual space $\Delta\phi(\w)=\|\xi\|_2\, /\, \|\nabla\phi(\w^0)\|_2$ (i.e. $\xi$ normalized relative to the initial weights) and $\Delta\mathrm{CDF}=\int\,dt\, |\mathrm{eCDF}(t) - \mathrm{CDF}(t)|$ (i.e. the difference between the empirical cumulative density function, $\mathrm{eCDF}(t)$ and the standard normal $\mathrm{CDF}(t)$)\footnote{We use normalized but uncentered weight distributions, since we predict the change $\xi$ to be zero-mean.}.

First, we find that $\Delta\mathrm{CDF}$ is always relatively small ($\le 0.2$) and converges towards zero as the number of weights increases (\cref{fig:linreg_convergence}A for negative entropy with $c=2$. The full results are postponed to the appendix; see \cref{fig:linreg_convergence_appendix}). Hence, we find that the weight changes in the dual space behave like a Gaussian when we know what potential was used for training, confirming \cref{theorem:gaussian_process_main_text}. 
Second, we find that the magnitude of weight changes in the dual space $\Delta\phi(\w)$ is much smaller than 1 for large widths, justifying the linearization in \cref{eq:winf_approx}.

\subsection{Robustness to potential change}
\label{sec:robust_potential}
We also need to test what happens if we don't know the true potential used for training, here denoted $\phi$, which is the case for neural data. As discussed after \cref{theorem:gaussian_process_main_text}, for $\phi'$ close to the true potential we should see Gaussian changes, and for $\phi'$ far from $\phi$ -- non-Gaussian ones. We test negative entropy and 2/3-norms as ${\phi}$, and $p$-norm with $p\in[1.1, 3]$ and negative entropy as $\phi'$; $\w^0$ is drawn from either Gaussian or log-normal distributions with approximately the same variance. 
We draw a Gaussian $\xi_{\phi}$ to compute $\w^\infty$ using ${\phi}$ from $\xi_{\phi} \!=\! \nabla\phi(\w^\infty) \!-\! \nabla\phi(\w^0)$. 
We find that the ``empirical'' change for $\phi'$, $\xi_{\phi'} \!=\! \nabla\phi'(\w^\infty) \!-\! \nabla'\phi(\w^0)$, is indeed Gaussian only for small variations from the true potential $\phi$ (\cref{fig:linreg_convergence}B). 
In particular, the distinction between negative entropy and other, non-multiplicative updates, is very pronounced.\footnote{Negative entropy is placed at $p=1$, as for weights that sum to 1, $p-$norms are equivalent to the shifted and scaled negative Tsallis entropy, which tends to negative entropy as $p\rightarrow 1$.}
For ${\phi}\!=\,$3-norm and log-normal initialization, the range for a Gaussian $\xi_{\phi'}$ almost reached $p=2.3$ (\cref{fig:linreg_convergence}B, bottom, orange line). %
Thus, if we hypothesize a potential, $\phi'$, that is similar to the true potential $\phi$, we get a nearly Gaussian $\xi_{\phi'}$. 
In contrast, if we have the wrong form for the potential (e.g. the potential is the negative entropy but we hypothesize a 3-norm potential), then we get a clearly non-Gaussian distribution. 
As such, we can use the distribution of $\xi_{\phi'}$ to test hypotheses about synaptic geometry.

\subsection{Finetuning of deep networks}
\label{sec:finetuning}
We expect our theory to work for networks that behave similarly to their linearization during learning. One such example is a pretrained trained network that's being finetuned on new data. If the new data comes from a similar distribution as the pretraining data, then the linearized network should approximate its non-linearized counterpart well with respect to weight changes \cite{mohamadi2022making, ren2023prepare}. This is also the scenario we expect to see in neural data if we are training an animal on an ethologically relevant task that matches its prior life experience and innate capabilities to a reasonable degree. 

Finetuning deep networks still breaks several assumption of \cref{theorem:gaussian_process_main_text}: the initial weights are not i.i.d., the network is not linearized (although potentially close to it), and the data correlations are unknown. Moreover, the cross-entropy loss we use is technically not a loss with a unique finite root. 
However, the readout layers in the deep networks that we test have fewer units than the number of ImageNet~\cite{deng2009imagenet} classes, so the activity in the readout layer is not linearly separable and the weights do not tend to infinity. 
Yet, they are still large and change significantly during training, so we exclude the readout layer when assessing weight distributions (along with biases and BatchNorm parameters).
\begin{figure}[t]
    \centering
    \includegraphics[width=\textwidth]{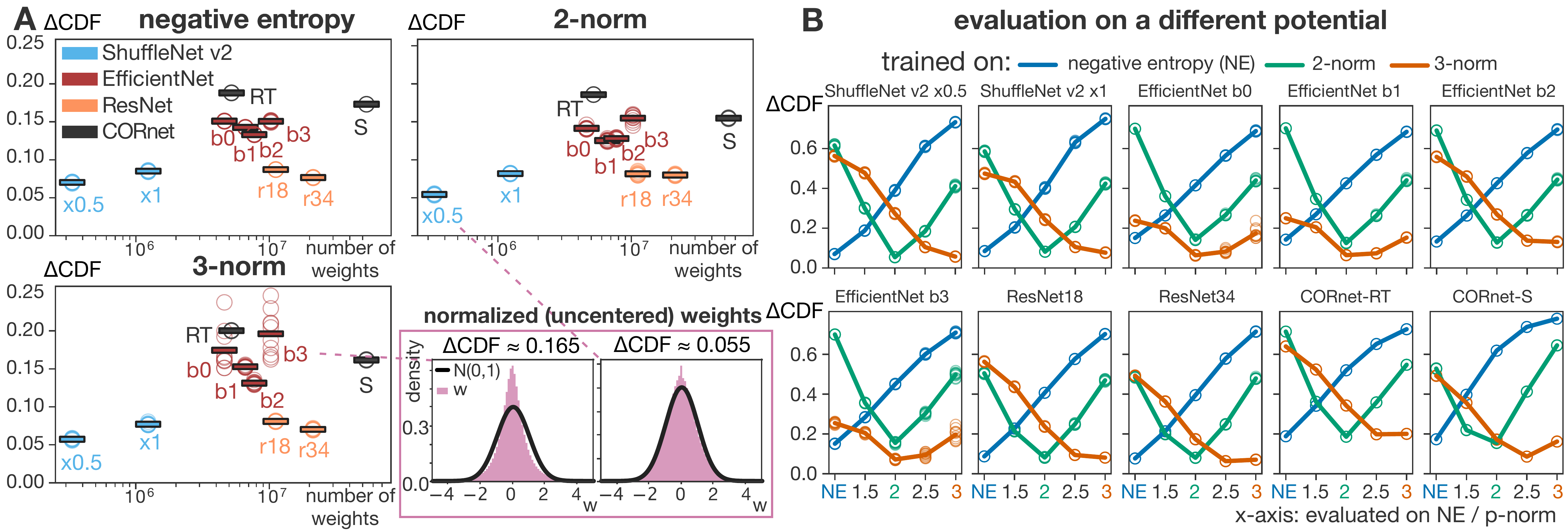}
    \caption{Finetuning on 10 randomly sampled ImageNet validation subsets ($N=D^{0.5}$ data points). \textbf{A.} Integral of absolute CDF difference ($\Delta\mathrm{CDF}$) between normalized uncentered weights and $\NN(0, 1)$ for networks trained with different potentials. Circle: individual value; bar: mean over seeds. Pink box (bottom right): examples of weight change histograms (pink) plotted against $\NN(0, 1)$ (black). \textbf{B.} Same as \textbf{A.}, but $\Delta\mathrm{CDF}$  is calculated w.r.t. other potentials.}
    \label{fig:finetuned_convergence}
\end{figure}
We use networks pretrained on ImageNet, and finetune them to 100\% accuracy on a subset of ImageNet validation set. 
The pretrained networks have not seen the validation data, but they already perform reasonably well on it, so there's no distribution shift in the data that would force the networks to change their weights a lot. We used $N=D^{0.5}$ data points for $D$ weights ($N=D^{0.75}$ exceeded dataset size) and four architecture types: ShuffleNet v2 \cite{ma2018shufflenet} (x0.5/x1), EfficientNet \cite{tan2019efficientnet} (b0-b3), ResNet \cite{he2016deep} (18/34), and CORnet \cite{KubiliusSchrimpf2019CORnet} (S/RT) chosen to span a wide range of parameter counts (0.3M to 53.4M). CORnets were also chosen since they mimic the primate ventral stream and its recurrent connections.

For all three tested potentials, weight changes are close to a Gaussian distribution ($\Delta\mathrm{CDF}<0.2$; \cref{fig:finetuned_convergence}A). ShuffleNet v2s and ResNets reach consistently more Gaussian solutions than EfficientNets and CORnets, despite ShuffleNet v2s having fewer parameters. This suggests that some trained architectures are more ``linear'', which may be due to the architecture itself, or the way the networks were trained. The magnitude of changes in the dual space was typically smaller than 0.2 (see \cref{app:sec:experiments}).
Just like for the toy task in \cref{fig:linreg_convergence}B, if we trained with a specific potential, $\phi$, we only observed Gaussian changes when the hypothesized potential, $\phi'$, was close to $\phi$ (\cref{fig:finetuned_convergence}B). 
The only exceptions were 3-norm-trained EfficientNets 0,1,3 and CORnet-S, for which the 2-norm solution was slightly more Gaussian than the 3-norm solution even when $\phi$ was a 3-norm.
However, this difference was small, and the networks had overall worse fit than other architectures.
In all cases, using negative entropy as $\phi'$ provided the best fit for negative entropy-trained networks and the worst fit for other potentials. 
This is important because negative entropy results in multiplicative weights updates, while $p$-norms result in additive updates, which represent very distinct hypotheses for synaptic plasticity mechanisms. Taken together, these empirical results suggest that \cref{theorem:gaussian_process_main_text} holds more generally and can be used to infer the distribution of weight changes beyond linear regression.

\subsection{Estimating synaptic geometry experimentally}
\label{sec:actual_neural_data}

Previously we showed that \cref{theorem:gaussian_process_main_text} holds when finetuning pretrained deep networks, despite non-i.i.d. initial weights and an unknown input correlation structure. 
As such, these finetuning experiments indicate our theory is applicable to neuroscience experiments when we measure the distribution of weights before and after learning, and use the histogram of weight changes to estimate the synaptic geometry.
In this section we apply this technique to experimental neuroscience data.

Specifically, we find that if one knows the initial weights, $\w^0$, then it is possible to distinguish between different potentials.
But if the initial weights are unknown, then multiple potential functions can fit the data.
To show this, we use data from a recent experimental study that measured an analogue of synaptic weights (synaptic spine size) using electron microscopy \cite{dorkenwald2022binary} (pink histogram in \cref{fig:data_plots}). 
In the log space, \cite{dorkenwald2022binary} found that the distribution was well modelled by a mixture of two Gaussians with approximately the same variance (pink histogram in \cref{fig:data_plots}) -- instead of one Gaussian as has been previously reported \cite{loewenstein2011multiplicative}. 
First, we show that the experimental data is consistent with the negative entropy potential (\cref{fig:data_plots}A), when using the following initialization:
consider elements of $\w^0$ equal to a constant $\mu_1$ with probability $p$, and to $\mu_2$ with probability $1-p$. %
Then, $\nabla\phi(\w^\infty)=\nabla\phi(\w^0) + \xi$ will be a mixture of two Gaussians with the same variance but different means, and \cref{fig:data_plots}A shows the $\xi$ that fits the experimental data (parameters from \cite{dorkenwald2022binary} with equalized variance, see \cref{app:sec:data}). 
However, the data is also consistent with the 3-norm potential for a different initialization: 
if $\w^0$ is a mixture of a constant and a log-normal, then adding an appropriate $\xi$ in the dual space also results in a weight distribution that resemble a mixture of two log-normals and appears to fit the experimental data well (\cref{fig:data_plots}B).
Thus, if our goal is to determine the synaptic geometry of the brain, then it is important to estimate the synaptic weight distributions before and after training.
Nevertheless, with this data we can rule out a Euclidean synaptic geometry, and if we do have access to $\w^0$, then our results show that it is indeed possible to experimentally estimate the potential function.

\begin{figure}[t]
    \centering
    \includegraphics[width=0.95\textwidth]{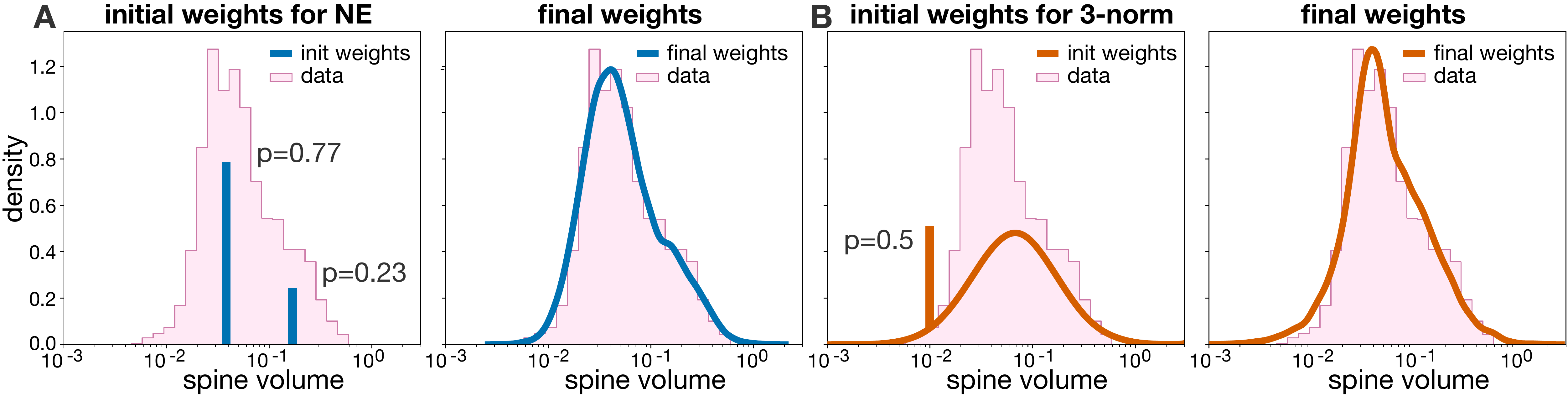}
    \caption{Observed vs. modeled synaptic distributions. Pink: spine volume ($\mu m^3$) from \cite{dorkenwald2022binary}; vertical bars: point mass initializations. \textbf{A.} Left: initial weights for negative entropy (NE); right: final weight for a Gaussian change in the dual space. \textbf{B.} Same as \textbf{A}, but for 3-norm.}
    \label{fig:data_plots}
\end{figure}

\section{Discussion}

We presented a mirror descent-based theory of what synaptic weight distributions can tell us about the underlying synaptic geometry of a network. 
For a range of loss functions and under mild assumptions on the data, we showed that weight distributions depend on the synaptic geometry, but not on the loss or the training data. 
Experimentally, we showed that our theory applies to finetuned deep networks across different architectures (including recurrent ones) and network sizes. Thus, our theory would likely apply as well to the hierarchical, recurrent architectures seen in the brain.
Our work predicts that if we know synaptic weights before and after learning, we can find the underlying synaptic geometry by finding a transformation in which the weight change is Gaussian.

It is important to note that by adopting the mirror descent framework we made the assumption that the brain would seek to achieve the best performance increases for the least amount of synaptic change possible.
But, these results could be extended to learning algorithms that are not explicitly derived from that principle, such as three-factor Hebbian learning \cite{fremaux2016neuromodulated,kusmierz2017learning,pogodin2020kernelized}. 
For a single layer, three-factor updates between $y$ and $\x$ follow $\epsilon\,y\,\x$ for some error signal $\epsilon$. 
As long as $\epsilon$ dynamics leads to a solution, the analysis should be similar to the general mirror descent one since weights will span $\x$ (as required by the theory).
For multi-layered networks, all we would need to assume is that the input to each layer does not change too much over time, similar to the linearization argument made for deep networks. 

More broadly, our work opens new experimental avenues for understanding synaptic plasticity in the brain. 
Our approach isolates synaptic geometry from other components of learning, such as losses, error signals, error pathways, etc.
Therefore, our findings make it practical to experimentally determine the brain's synaptic geometry. 

\section*{Acknowledgments}
This work was supported by the following sources.
BAR: NSERC (Discovery Grant RGPIN-2020-05105, RGPIN-2018-04821; Discovery Accelerator Supplement: RGPAS-2020-00031; Arthur B. McDonald Fellowship: 566355-2022), Healthy Brains, Healthy Lives (New Investigator Award: 2b-NISU-8), and CIFAR (Canada AI Chair; Learning in Machine and Brains Fellowship).
GL: NSERC (Discovery Grant RGPIN-2018-04821), Canada-CIFAR AI Chair, Canada Research Chair in Neural Computations and Interfacing, and Samsung Electronics Co., Ldt.
GG: Canada CIFAR AI Chair.
JC: IVADO Postdoctoral Fellowship and the Canada First Research Excellence Fund.
AG: Vanier Canada Graduate scholarship.

This research was enabled in part by support provided by (Calcul Québec) (\url{https://www.calculquebec.ca/en/}) and Compute Canada (\url{www.computecanada.ca}). The authors acknowledge the material support of NVIDIA in the form of computational resources.

\bibliography{bibliography}
\bibliographystyle{iclr2024_conference}

\clearpage
\appendix

\section*{Appendices}
\section{Weight distribution convergence}
\label{app:sec:theory}

We start by giving an intuitive outline of our proof of Gaussian convergence of weight changes. As for any Gaussian convergence, we’ll end up using the central limit theorem (although an unusual version of it) w.r.t. the dataset size. With this in mind, the steps are:
\begin{itemize}
    \item[Setup]
    First, we decide how we take the large data and network limits. We have a vector of weight changes whose size grows to infinity. To discuss what it converges to, we consider this vector to be a part of a stochastic process (so, an already infinite-dimensional vector). We then consider the convergence of this stochastic process to some limiting process (in our case, to a Gaussian process). This allows us to approximate a sample of weight changes as a sample from that limiting process.
    \item[\Cref{lemma:xhx_to_identity}] Next, we show that the inverse matrix in \cref{eq:lambda_approx} used in the weight change expression behaves like an identity matrix. If we just took the limit w.r.t. the network size, this lemma would be trivial since off-diagonal elements are products of independent vectors. But we also increase the dataset size (and hence the matrix itself), so we need to be more careful. This is where we get (most of) the requirements for how the dataset size should scale with network size.
    \item[\Cref{lemma:gaussian_proj}] shows that the deviations from identity resulting from \cref{lemma:xhx_to_identity} don’t affect the final weight change expression when we take the limits.
    \item[\Cref{app:lemma:clt}] shows that if you replace that inverse matrix with an identity, you can apply the exchangeable central limit theorem to the weight change expression. At this point, it’s applied for any finite slice of the stochastic process we defined. The exchangeable central limit theorem has much more restrictive moment conditions than the standard CLT, since it doesn’t require independence, so we have to carefully check all of them.
    \item[\Cref{app:theorem:convergencd_to_gp}] Finally, we combine the lemmas and show that any finite slice of that stochastic process converges to a Gaussian random variable. Due to the properties of Gaussian processes, we can conclude that the whole process converges to a Gaussian process.
\end{itemize}

\paragraph{Setup} We want to show that the weight change looks Gaussian in the large width limit. Since the number of weights goes to infinity, ``looks Gaussian'' should translate to convergence to a Gaussian process. Our approach will be similar to \cite{matthews2018gaussian}, which showed convergence of wide networks to Gaussian processes. However, they worked with finite-dimensional inputs and multiple hidden layers; for our purposes we'll need to have an infinite-dimensional input and a single layer. 

We'll be working with countable stochastic processes defined over $\mathbb{Z}_{>0}$. We need three stochastic processes: the input data $X$, the initial weights $W^0$ and the ``teacher'' weights $W^*$.

For width $D$, the label $y_D$ and the model at initialization $\widehat y_D^0$ are defined as
\begin{align}
    y_D = \sum_{d=1}^{D} \frac{1}{\sqrt{D}} X_d\,W^*_d\,,\quad \widehat y_D^0 = \sum_{d=1}^{D} \frac{1}{\sqrt{D}} X_d\,W^0_d\,.
\end{align}

For $N$ sample points of dimension $D$, we stack them into an $N\times D$ matrix $\bb X^D$, and denote the $D$-dimensional samples of weights as $\bb w^{*D},\ \bb w^{0D}$. The weight change we're interested in is then
\begin{align}
    \xi^D \,&= \frac{1}{\sqrt{D}}\brackets{\bb X^D}\T\brackets{\bb X^D\,\bb H^D\,(\bb X^D)\T}^{-1}\bb X^D(\bb w^{*D} - \bb w^{0D})\\
    &=\brackets{\bb X^D}\T\brackets{\bb X^D\,\bb H^D\,(\bb X^D)\T}^{-1}(\bb y_D - \bb y^{0}_D)\,.
\end{align}
Note the explicit $1/\sqrt{D}$ scaling we omitted in the main text for convenience and $N$-dimensional vectors of labels $\bb y_D,\,\bb y_D^0$. Here $\bb H^D$ is a diagonal matrix with $h_{dd}=\nabla^2\phi^{-1}(W_d^0 / \sqrt{D})$ (we dropped the ${\phi^{-1}}$ subscript used in the main text for readability).

\textbf{Invertibility assumption.} Throughout the proofs we assume that $\bb X^D\,\bb H^D\,(\bb X^D)\T$ is almost surely invertible. This is a very mild assumption, although it can break if $X_d^i$ are sampled from a discrete distribution, if $X_d$ have strong correlations along $d$ or if the initial weights are \textit{not} almost surely non-zero.

$\xi^D$ is a $D$-dimensional vector, but we understand it as a subset of the following stochastic process:
\begin{equation}
    \Xi^{D} = \brackets{\bb X}\T\brackets{\bb X^D\,\bb H^D\,(\bb X^D)\T}^{-1}(\bb y_D - \bb y^{0}_D)\,.
    \label{app:eq:sigma_d_full}
\end{equation}
This infinite-width expansion is similar to the one used in \cite{matthews2018gaussian}. It's a well-defined stochastic process since $N$ is finite, meaning that $\Xi^D$ is a weighted sum of $N$ stochastic processes. However, it formalizes convergence of a finite-dimensional vector $\xi^D$ to something infinite-dimensional. That is, we'll work with convergence of the process $\Xi^D$ to a limit process $\Xi$ in distribution.\footnote{See Eq. 10 in \cite{matthews2018gaussian} for the discussion on what "in distribution" means for stochastic processes.}

First, we show that an appropriately scaled matrix inside the inverse behaves like an identity:

\begin{lemma}
Assume that the input data points $X_d$ have zero mean and unit variance. Additionally assume that uniformly for any $d$, there exists a constant $c$ such that
\begin{align}
    \sum_{d'=1}^\infty \brackets{\expect\, X_d\,X_{d'}}^2 \leq c\,,\quad \sum_{d'=1}^\infty \left|\mathrm{Cov}\brackets{X_d^2,\,X_{d'}^2} \right| \leq c\,.
    \label{eq:weak_corr_condition}
\end{align}
Also for i.i.d. initial weights, define moments (assuming they exist) of $h_{dd}=\phi^{-1}(W_d^0 / \sqrt{D})$ as
\begin{equation}
    h_1(D) = \expect\, \nabla^2\phi^{-1}(W_d^0 / \sqrt{D})\,,\quad h_2(D) = \var\brackets{ \nabla^2\phi^{-1}(W_d^0 / \sqrt{D})}\,.
\end{equation}

Then for $N, D$ large enough (so the r.h.s. is smaller than 4),
\begin{equation}
    \left\|\brackets{\frac{1}{D\,h_1(D)}\bb X^D\,\bb H^D\,(\bb X^D)\T}^{-1} - \bb I_N\right\|_2 \leq 8\,\sqrt{\frac{N^{4}}{D}\frac{c+1}{p}\brackets{1 + \frac{h_2}{h_1^2}}}
\end{equation}
with probability at least $1-p$.
\label{lemma:xhx_to_identity}
\end{lemma}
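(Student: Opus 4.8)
The plan is to write $\bb M \coloneqq \frac{1}{D\,h_1(D)}\bb X^D\,\bb H^D\,(\bb X^D)\T$ and control $\|\bb M - \bb I_N\|_2$ entrywise, since for an $N\times N$ matrix the spectral norm is bounded by $N$ times the largest entry in absolute value (or, slightly sharper, $\|\bb M-\bb I\|_2 \le \sqrt{\sum_{nm}(M_{nm}-\delta_{nm})^2}$). The $(n,m)$ entry is $\frac{1}{D\,h_1}\sum_{d=1}^D h_{dd}\,X^n_d X^m_d$, where $X^n_d$ denotes coordinate $d$ of sample $n$; the $h_{dd}$ are i.i.d.\ across $d$ and independent of the data, and within a fixed sample the $X^n_d$ have the weak-correlation structure of \eqref{eq:weak_corr_condition}, while distinct samples are independent. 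So the first step is a moment computation: show $\expect M_{nm} = \delta_{nm}$ (using $\expect h_{dd} = h_1$ and $\expect X_d X_{d'} = \delta_{dd'}$ for the \emph{same} sample, and independence across samples for $n\ne m$), and then bound $\Var(M_{nm})$.

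The variance bound is where the hypotheses do their work. For the diagonal entry $M_{nn} = \frac{1}{D h_1}\sum_d h_{dd} (X^n_d)^2$, expanding $\Var$ gives a double sum over $d,d'$ of terms $\expect[h_{dd}h_{d'd'}] \expect[(X^n_d)^2 (X^n_{d'})^2] - h_1^2$; splitting into $d = d'$ (giving an $O(1/D)$ contribution scaled by $h_1^2 + h_2$, using the finite fourth moment of $X$) and $d \ne d'$ (giving $\frac{h_1^2}{D^2 h_1^2}\sum_{d\ne d'}\mathrm{Cov}((X^n_d)^2,(X^n_{d'})^2)$, which is $O(1/D)$ by the second condition in \eqref{eq:weak_corr_condition}) yields $\Var(M_{nn}) = O\!\big(\tfrac{c+1}{D}(1 + h_2/h_1^2)\big)$. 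For an off-diagonal entry $M_{nm}$ with $n\ne m$, independence of samples $n$ and $m$ makes the cross terms factor, and one gets $\Var(M_{nm}) = \frac{1}{D^2 h_1^2}\sum_{d,d'}\expect[h_{dd}h_{d'd'}]\,\expect[X^n_d X^n_{d'}]\,\expect[X^m_d X^m_{d'}]$, which by Cauchy--Schwarz and the first condition in \eqref{eq:weak_corr_condition} is again $O\!\big(\tfrac{c+1}{D}(1+h_2/h_1^2)\big)$. Then Chebyshev on each of the $N^2$ entries, together with a union bound, gives that with probability at least $1-p$ every entry deviates from $\delta_{nm}$ by at most $O\!\big(\sqrt{\tfrac{N^2}{D}\cdot\tfrac{c+1}{p}(1+h_2/h_1^2)}\big)$; multiplying by $N$ (or $\sqrt{N^2\cdot N^2}=N^2$ inside the square root when summing squares) to pass from entrywise to spectral-norm control produces the stated $\sqrt{N^4/D}$ rate and the explicit constant $8$.

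The last step is to convert a bound on $\|\bb M - \bb I_N\|_2$ into a bound on $\|\bb M^{-1} - \bb I_N\|_2$: if $\|\bb M - \bb I_N\|_2 = \epsilon < 1$ then $\bb M$ is invertible and $\|\bb M^{-1} - \bb I_N\|_2 \le \epsilon/(1-\epsilon) \le 2\epsilon$ once $\epsilon \le 1/2$, which is exactly the ``so the r.h.s.\ is smaller than $4$'' caveat (the factor $8 = 4\cdot 2$ absorbing the Neumann-series loss on top of the factor-$4$ entrywise-to-spectral and Chebyshev constants). The main obstacle is the variance computation for the diagonal entries: one has to keep careful track of which expectations are over the weights versus the data, make sure the finite-fourth-moment assumption on $X$ (subsumed in the finite eighth moment assumed in Theorem~\ref{theorem:gaussian_process_main_text}) is genuinely enough to bound the $d=d'$ diagonal term, and verify that the $h_2/h_1^2$ factor is the only place the weight distribution enters. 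Everything else --- the union bound, the Neumann-series inversion, the entrywise-to-operator-norm passage --- is routine; the $N^4$ (rather than $N^2$) in the final bound is simply the price of the crude $\|\cdot\|_2 \le N \max_{nm}|\cdot|$ step and is flagged in the main text as likely improvable.
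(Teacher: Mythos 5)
Your proposal is correct and follows essentially the same route as the paper's proof: compute $\expect A_{ij}=\delta_{ij}$, bound the diagonal and off-diagonal variances by $O\bigl(\tfrac{c+1}{D}(1+h_2/h_1^2)\bigr)$ using the two correlation conditions, apply Chebyshev with a union bound over the $N^2$ entries, pass to the spectral norm via the Frobenius norm (the source of the $N^4$), and invert. The only (immaterial) difference is the last step, where you use a Neumann-series bound $\|\bb M^{-1}-\bb I\|_2\le\epsilon/(1-\epsilon)$ while the paper bounds the singular values of $\bb A$ directly; both are routine and absorbed into the constant $8$.
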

\begin{proof}
Denote 
\begin{equation}
    \bb A\equiv \frac{1}{D\,h_1}\bb X^D\,\bb H^D\,(\bb X^D)\T\,,\quad A_{ij} = \frac{1}{D\,h_1}\sum_{d=1}^D X_{d}^i\,X_{d}^j\,h_{dd}\,,
    \label{eq:a_in_inverse_def}
\end{equation}
dropping the explicit $D$-dependence in $h_1$ and $h_2$ for convenience.

Since all $X_d$ are independent from $W_d^0$, $\expect\, A_{ij} = \delta_{ij}$. Variance can be bounded. For $i=j$,
\begin{align}
    \var(A_{ii}) \,&= \frac{1}{D^2\,h_1^2}\sum_{d=1}^D\,\var((X_d^i)^2\,h_{dd}) + \frac{1}{D^2\,h^2_1}\sum_{d,d'\neq d}^D\mathrm{Cov}((X_d^i)^2\,h_{dd},\,(X_{d'}^i)^2\,h_{d'd'})\\
    &=\sum_{d=1}^D\frac{h_2 + (h_1^2 + h_2)\,\var((X_d^i)^2)}{D^2\,h_1^2} + \sum_{d,d'\neq d}^D\frac{h_1^2\,\mathrm{Cov}\brackets{(X_d^i)^2, (X_{d'}^i)^2}}{D^2\,h_1^2}\\
    &\leq \frac{h_2(1 + \var((X_d^i)^2))}{D\,h_1^2} + \frac{c}{D} \leq \frac{c}{D}\brackets{1 + \frac{h_2}{h_1^2}} + \frac{h_2}{D\,h_1^2}\leq \frac{c+1}{D}\brackets{1 + \frac{h_2}{h_1^2}}\,,
\end{align}
where in the last line we used the second part of \cref{eq:weak_corr_condition}.

For $i\neq j$,
\begin{align}
    \var(A_{ij})\,&=\frac{1}{D^2\,h_1^2}\sum_{d=1}^D\,\var(X_d^iX_d^j\,h_{dd}) + \frac{1}{D^2\,h^2_1}\sum_{d,d'\neq d}^D\mathrm{Cov}(X_d^i\,X_d^j\,h_{dd},\,X_{d'}^i\, X_{d'}^j\,h_{d'd'})\\
    &=\sum_{d=1}^D\frac{(h_1^2 + h_2)\,\brackets{\var(X_d^i)}^2}{D^2\,h_1^2} + \sum_{d,d'\neq d}^D \frac{h_1^2\brackets{\mathrm{Cov}(X_d^i,X_{d'}^i)}^2}{D^2\,h_1^2}\leq \frac{c}{D}\brackets{1 + \frac{h_2}{h_1^2}}\,,
\end{align}
now using the first part of \cref{eq:weak_corr_condition} for the last inequality.

Using Chebyshev's inequality and the union bound, we can bound
\begin{equation}
    \PP{\max_{ij}|A_{ij} - \delta_{ij}|\geq \eps}\leq \frac{N^2}{\eps^2}\frac{c+1}{D}\brackets{1 + \frac{h_2}{h_1^2}}\equiv p\,.
\end{equation}
Therefore, 
\begin{equation}
    \|\bb A - \bb I_N\|_2 \leq \|\bb A - \bb I_N\|_F \leq \eps\,N = \sqrt{\frac{N^{4}}{D}\frac{c+1}{p}\brackets{1 + \frac{h_2}{h_1^2}}}
\end{equation}
with probability at least $1-p$.

By Lemma 4.1.5 of \cite{vershynin2018high}, if $\|\bb A - \bb I_N\|_2  \leq \eps\,N$ then all singular values $\sigma_i$ of $\bb A$ lie between $(1-\eps\,N)^2$ and $(1+\eps\,N)^2$ for $N$ large enough so $\eps\,N< 1$. Therefore, we also have a bound on the singular values of $\bb A^{-1}$. Additionally taking parameters big enough for $\eps\,N \leq 0.5$, we have 
\begin{align}
    \|\bb A^{-1} - \bb I_{N}\|_2 = \max_{i} \brackets{\frac{1}{\sigma_i} - 1} \leq \frac{1}{(1-\eps\,N)^2} - 1 = \frac{2\eps\,N - \eps^2N^2}{(1-\eps\,N)^2} \leq 8\,\eps\,N\,,
\end{align}
which completes the proof.
\end{proof}

Now we need to show that the inverse behaves like an identity in the overall expression as well: 
\begin{lemma}
    In the assumptions of \cref{lemma:xhx_to_identity}, additionally assume that $W_d^*$ and $W_d^0$ have finite mean and variance independent of $d$ and $D$, that $h_2(D)/h_1(D)^2$ is independent of $D$, and $\sum_{d'=1}^\infty \expect\, X_d\,X_{d'} \leq c$ uniformly over $d$. Then for any scalar weights $\alpha_i$ and a finite index set $\mathcal{A}\subset \mathbb{Z}_{>0}$, as $D,\,N\rightarrow \infty$ with $N=o(D^{1/(5+\delta)})$ for any $\delta > 0$,
    \begin{equation}
        \frac{1}{\sqrt{N}}\sum_{p\in \mathcal{A}}\alpha_p\brackets{\bb X_{:\,p}}\T\brackets{\bb A^{-1} - \bb I_N}(\bb y_D - \bb y^{0}_D) \rightarrow_p 0\,.
    \end{equation}
\label{lemma:gaussian_proj}
\end{lemma}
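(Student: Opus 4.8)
The plan is to bound the quantity in absolute value and show it vanishes in probability by combining \cref{lemma:xhx_to_identity} with a crude second-moment bound on $(\bb y_D - \bb y^0_D)$ and on the projections $\brackets{\bb X_{:\,p}}\T$. First I would write the target as a single inner product $\frac{1}{\sqrt N}\,\vu\T\brackets{\bb A^{-1}-\bb I_N}(\bb y_D - \bb y^0_D)$, where $\vu = \sum_{p\in\mathcal A}\alpha_p\,\bb X_{:\,p}\in\RR^N$ has entries $u_i = \sum_{p\in\mathcal A}\alpha_p X^i_p$. By Cauchy--Schwarz this is at most $\frac{1}{\sqrt N}\,\|\bb A^{-1}-\bb I_N\|_2\,\|\vu\|_2\,\|\bb y_D - \bb y^0_D\|_2$, so it suffices to control the three factors separately on a high-probability event.

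The middle factor is handled by \cref{lemma:xhx_to_identity}: on an event of probability at least $1-p$ we have $\|\bb A^{-1}-\bb I_N\|_2 \le 8\sqrt{\tfrac{N^4}{D}\,\tfrac{c+1}{p}(1+h_2/h_1^2)}$, which is $O\!\brackets{N^2/\sqrt D}$ once $h_2/h_1^2$ is $D$-independent (one of the new hypotheses). For the two vector norms I would use that each entry of $\bb y_D - \bb y^0_D$ is $\frac{1}{\sqrt D}\sum_{d=1}^D X_d(W^*_d - W^0_d)$, which has mean zero and variance $\frac{1}{D}\sum_{d,d'} \expect X_d X_{d'}\,\expect(W^*_d-W^0_d)(W^*_{d'}-W^0_{d'})$; since $W^*,W^0$ are i.i.d.\ with $d$-independent finite variance this collapses to $\frac{\sigma^2}{D}\sum_d \expect X_d^2 \le \sigma^2$ (actually the cross terms give $\frac{1}{D}\sum_d(\expect X_d^2)\,\Var(W^*_d-W^0_d)$ only, using independence across $d$ of the weights), so $\expect\|\bb y_D-\bb y^0_D\|_2^2 = O(N)$, giving $\|\bb y_D-\bb y^0_D\|_2 = O_p(\sqrt N)$. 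Similarly $\expect\|\vu\|_2^2 = \sum_{i=1}^N \expect u_i^2 = N\sum_{p,q\in\mathcal A}\alpha_p\alpha_q\,\expect X_p X_q = O(N)$ since $\mathcal A$ is finite and the covariances are bounded, so $\|\vu\|_2 = O_p(\sqrt N)$. Multiplying: the whole expression is $O_p\!\brackets{\frac{1}{\sqrt N}\cdot\frac{N^2}{\sqrt D}\cdot\sqrt N\cdot\sqrt N} = O_p\!\brackets{N^{5/2}/\sqrt D}$, which I would need to sharpen.

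The main obstacle is exactly this bookkeeping of powers of $N$: the naive Cauchy--Schwarz bound above gives $N^{5/2}/\sqrt D \to 0$ only for $N = o(D^{1/5})$, whereas the lemma claims $N = o(D^{1/(5+\delta)})$, which is essentially the same threshold --- so in fact the crude bound may already suffice, but I would double-check whether the probability-$p$ slack in \cref{lemma:xhx_to_identity} forces an extra factor (one takes $p = p(D)\to 0$ slowly, e.g.\ $p = N^{-\delta'}$, which inflates the bound by $p^{-1/2} = N^{\delta'/2}$ and is what produces the $\delta$ in the exponent). The clean way to organize this: fix $\delta>0$, choose $p = p_D\to 0$ so that $N^2 p_D^{-1/2}/\sqrt D \to 0$ under $N = o(D^{1/(5+\delta)})$ (possible since the exponent budget is $5+\delta$), then on the complementary good event the product of the three $O_p(\sqrt N)$-or-deterministic factors times $1/\sqrt N$ tends to $0$; on the bad event, of probability $p_D\to 0$, we simply have convergence in probability regardless. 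A secondary subtlety is ensuring $\|\vu\|_2$ and $\|\bb y_D - \bb y^0_D\|_2$ are genuinely $O_p(\sqrt N)$ and not merely $O_p(\sqrt N)$ in expectation --- this follows from Markov applied to the second moments computed above, using the weak-correlation condition $\sum_{d'}\expect X_d X_{d'} \le c$ (the newly added hypothesis) to keep the $\bb y$-variance bounded, so I would invoke that hypothesis explicitly there. Finally I would note that since $\mathcal A$ is finite and fixed, no uniformity over index sets is needed, which keeps the argument elementary throughout.
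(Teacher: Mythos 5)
Your proposal is correct and follows essentially the same route as the paper: the same Cauchy--Schwarz factorization into $\|\sum_p\alpha_p\bb X_{:\,p}\|_2$, $\|\bb y_D-\bb y_D^0\|_2$, and $\|\bb A^{-1}-\bb I_N\|_2$, the same $O(N)$ second-moment bounds on the two vector norms (with the added correlation hypothesis controlling the cross terms in the $\bb y$-variance, which do not vanish since the weights are only assumed to have finite, not zero, mean), and the same resolution of the exponent bookkeeping by letting the failure probability in \cref{lemma:xhx_to_identity} decay slowly, which is exactly where the paper's $N^{5+\delta}=o(D)$ condition comes from.
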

\textbf{Remark.} For both cross-entropy and $p$-norm potentials, $\nabla^2\phi^{-1}(W/\sqrt{D})$ is a power of $W/\sqrt{D}$, and so $h_2(D) / h_1(D)^2$ depends only on the moments of $W$, but not on $D$, therefore satisfying the assumption.
\begin{proof}

Again denoting the matrix inside the inverse as $\bb A$ (see \cref{eq:a_in_inverse_def}), we first bound the following via Cauchy-Schwarz:
\begin{align}
        &\left|\sum_{i\in \mathcal{A}}\alpha_i\brackets{\bb X_{:\,i}}\T\brackets{\bb A^{-1} - \bb I_N}(\bb y_D - \bb y^{0}_D)\right|^2\leq \norm{\sum_{i\in \mathcal{A}}\alpha_i\brackets{\bb X_{:\,i}}}_2^2\,\norm{\bb y_D - \bb y_D^0}_2^2\,\norm{\bb A^{-1} - \bb I_N}_2^2\,.
        \label{eq:prob_to_zero_first_bound}
\end{align}

\textbf{First term.} Using Cauchy-Schwarz and that $X_p$ variances are one, and also that $\alpha,\,\mathcal{A}$ are fixed,
\begin{align}
    \expect\,\norm{\sum_{p\in \mathcal{A}}\alpha_p\brackets{\bb X_{:\,p}}}_2^2 \,& = N\,\expect\,\brackets{\sum_{p\in \mathcal{A}}\alpha_p \,X_p^1}^2\leq N\,\expect\,\brackets{\sum_{p\in \mathcal{A}}\alpha_p^2} \brackets{\sum_{p\in \mathcal{A}} \,(X_p^1)^2} = O(N)\,.
\end{align}

\textbf{Second term.} Using that weights and inputs are independent, and that the sum of input data covariances is bounded by $c$ by our assumption, 

\begin{align}
    &\expect\,\norm{\bb y_D - \bb y_D^0}_2^2 = \frac{N}{D}\,\expect\,\brackets{\sum_{d=1}^D X_d^1\,(W_d^*-W_d^0)}^2\\
    &\quad=\frac{N}{D}\,\sum_{d,d'\neq d}^D (\expect\,(W_d^*-W_d^0))^2\, \expect\,(X_d^1\,X_{d'}^1) \\
    &\qquad\qquad+ \frac{N}{D}\,\sum_{d=1}^D \brackets{(\expect\,(W_d^*-W_d^0))^2 + \var((W_d^*-W_d^0))} \expect\,(X_d^1)^2\\
    &\quad\leq N\,c\,(\expect\,(W_d^*-W_d^0))^2 + N\,\var((W_d^*-W_d^0))= O(N)\,.
\end{align}

\textbf{Third term.} Here we will use the probabilistic bound from \cref{lemma:xhx_to_identity}. 

Overall, for any $\delta > 0$,
\begin{align}
    &\PP{\frac{1}{N}\left|\sum_{i\in \mathcal{A}}\alpha_i\brackets{\bb X_{:\,i}}\T\brackets{\bb A^{-1} - \bb I_N}(\bb y_D - \bb y^{0}_D)\right|^2 > \eps}\\
    &\quad\leq \PP{\norm{\sum_{i\in \mathcal{A}}\alpha_i\brackets{\bb X_{:\,i}}}_2^2 > N^{1+\frac{\delta}{2}};\,\norm{\bb y_D - \bb y_D^0}_2^2 > N^{1+\frac{\delta}{2}};\,\norm{\bb A^{-1} - \bb I_N}_2^2 > \frac{\eps}{N^{1+\delta}}}\\
    &\quad\leq \frac{\expect\,\norm{\sum_{p\in \mathcal{A}}\alpha_p\brackets{\bb X_{:\,p}}}_2^2}{N^{1+\frac{\delta}{2}}} + \frac{\expect\,\norm{\bb y_D - \bb y_D^0}_2^2}{N^{1+\frac{\delta}{2}}} + \PP{\norm{\bb A^{-1} - \bb I_N}_2^2 > \frac{\eps}{N^{1+\delta}}}\\
    &\quad\leq O\brackets{\frac{1}{N^{\delta/2}}} + 64\frac{N^{5+\delta}}{D}\frac{c+1}{\eps}\brackets{1 + \frac{h_2}{h_1^2}}
\end{align}
where we used \cref{eq:prob_to_zero_first_bound} in the first line, the union bound with Markov's inequality in the second and the bounds on all terms in the third one (for $N$ large enough so $\eps / N^{1+\delta} < 2)$.

As long as $N^{5 + \delta} = o(D)$ and $h_2/h_1^2$ is constant, the probability goes to zero with $N,D$.
\end{proof}

The previous result suggests we can replace the inverse matrix in $\Xi^D$ (\cref{app:eq:sigma_d_full}) with an identity. This will require an additional step, but now we can show that with the identity, for any finite index set $\mathcal{A}$ of points in $\Xi^D$ and arbitrary weights $\alpha$, the $\mathcal{A},\,\alpha$-projection of $\Xi^D$ solution converges to a Gaussian.

\begin{lemma}
    Consider a finite index set $\mathcal{A}$ and an associated vector of weights $\alpha$. If in addition to condition of \cref{lemma:gaussian_proj,lemma:xhx_to_identity} we assume weights have a finite 8th moment, $X^i_{p}$ are uniformly bounded, and that $X^i_{p}(\bb y_D - \bb y^{0}_D)_i$ (for point $i$ and index $p$) converges in probability to a random variable. Then the $\mathcal{A},\,\alpha$-projection defined as,
    \begin{equation}
        \frac{1}{\sqrt{N}}\sum_{i=1}^N R_{Ni},\quad R_{Ni} = \sum_{p\in \mathcal{A}}\alpha_p\,X^i_{p}(\bb y_D - \bb y^{0}_D)_i\,,
    \end{equation}
    converges to $\NN(0, \sigma^2_W\,\alpha\T\Sigma_{\mathcal{A}}\alpha)$ in distribution, where $(\Sigma_{\mathcal{A}})_{pp'} = \expect\,X^i_pX^i_{p'}$ and $\sigma^2_W=\expect\,(W_d^* - W_d^0)^2$, as long as $D$ increases monotonically with $N$ such that $N/D=o(1)$.
    \label{app:lemma:clt}
\end{lemma}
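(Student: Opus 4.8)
The plan is to view $\{R_{Ni}\}_{i=1}^{N}$ as a row-exchangeable triangular array and invoke a central limit theorem for interchangeable summands (of the Blum--Chernoff--Rosenblatt--Teicher type), checking its moment hypotheses one at a time. Exchangeability in $i$ holds because the rows $\bb X^{1},\dots,\bb X^{N}$ of $\bb X^{D}$ are i.i.d.\ and $\mathcal{A},\alpha$ are fixed; the rows are \emph{not} independent, since each $(\bb y_D-\bb y^{0}_D)_i$ involves the shared weights $\bb w^{*D},\bb w^{0D}$, but exchangeability is all that the theorem needs. Writing $\bm{\Delta}^{D}=\bb w^{*D}-\bb w^{0D}$, the structural fact that makes every required cross-moment computable is that, \emph{conditionally on} $\bm{\Delta}^{D}$, the variables $R_{N1},\dots,R_{NN}$ are i.i.d.; hence $\expect[R_{N1}R_{N2}]=\expect\bigl[(\expect[R_{N1}\mid\bm{\Delta}^{D}])^{2}\bigr]$, $\expect[R_{N1}^{2}R_{N2}^{2}]=\expect\bigl[(\expect[R_{N1}^{2}\mid\bm{\Delta}^{D}])^{2}\bigr]$, and similarly for the remaining joint moments.

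First I would handle the mean and the pairwise correlation. With $c_{pd}=\expect X^{1}_{p}X^{1}_{d}$ one has $\expect[R_{N1}\mid\bm{\Delta}^{D}]=\tfrac{1}{\sqrt D}\sum_{p\in\mathcal{A}}\alpha_{p}\sum_{d=1}^{D}c_{pd}\Delta^{D}_{d}$, whose second moment is $O(1/D)$ by the summability $\sum_{d}|c_{pd}|\le c$ assumed in \cref{lemma:gaussian_proj} and the finite moments of the weights. Thus the centering constant satisfies $\sqrt N\,\expect R_{N1}=O(\sqrt{N/D})\to0$, and $N\,\expect[R_{N1}R_{N2}]=O(N/D)\to0$, the ``asymptotic uncorrelatedness at rate $1/N$'' required of the array. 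For the second moment, I would expand $\expect[R_{N1}^{2}\mid\bm{\Delta}^{D}]=\sum_{p,p'\in\mathcal{A}}\alpha_{p}\alpha_{p'}\expect\bigl[X^{1}_{p}X^{1}_{p'}(\bb y_D-\bb y^{0}_D)_1^{2}\mid\bm{\Delta}^{D}\bigr]$ and use the weak-correlation bounds \cref{eq:weak_corr_condition}, the extra bound $\sum_{d'}\expect X_{d}X_{d'}\le c$, and the hypothesis that $X^{i}_{p}(\bb y_D-\bb y^{0}_D)_i$ converges in probability, to show that the off-diagonal-in-$d$ contributions are lower order and that $\expect[R_{N1}^{2}\mid\bm{\Delta}^{D}]\rightarrow_{p}v:=\sigma^{2}_{W}\,\alpha\T\Sigma_{\mathcal{A}}\alpha$. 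Boundedness of the $X^{i}_{p}$ together with the finite $8$th moment of the weights supply uniform integrability, so also $\expect R_{N1}^{2}\to v$, $\expect[R_{N1}^{2}R_{N2}^{2}]\to v^{2}$ (asymptotic independence of the squares), and $N\,\expect[R_{N1}^{2}R_{N2}]\to0$.

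It remains to check the Lyapunov-type condition $\tfrac1N\expect R_{N1}^{4}\to0$. Since $R_{N1}=\sum_{p\in\mathcal{A}}\alpha_{p}X^{1}_{p}(\bb y_D-\bb y^{0}_D)_1$ is a fixed finite combination of bounded factors times $(\bb y_D-\bb y^{0}_D)_1=\tfrac{1}{\sqrt D}\sum_{d}X^{1}_{d}\Delta^{D}_{d}$, it suffices to bound $\expect (\bb y_D-\bb y^{0}_D)_1^{4}$ uniformly in $D$; expanding the fourth power and using $|X^{1}_{d}|\le C$ a.s.\ with the finite $8$th moment of the $\Delta^{D}_{d}$ gives $O(1)$, so $\expect R_{N1}^{4}=O(1)$ and $\tfrac1N\expect R_{N1}^{4}\to0$. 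Feeding $\expect R_{N1}^{2}\to v$, the vanishing mean with $N\,\expect[R_{N1}R_{N2}]\to0$, $\expect[R_{N1}^{2}R_{N2}^{2}]\to v^{2}$ with $N\,\expect[R_{N1}^{2}R_{N2}]\to0$, and $\tfrac1N\expect R_{N1}^{4}\to0$ into the exchangeable CLT yields $\tfrac{1}{\sqrt N}\sum_{i=1}^{N}R_{Ni}\longrightarrow_{d}\NN(0,v)$ as $N\to\infty$ with $D=D(N)$ and $N/D=o(1)$, which is the claim.

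I expect the main obstacle to be the exact identification of the limiting second moment: showing that the cross-terms (indices $d\neq d'$ inside $(\bb y_D-\bb y^{0}_D)_1^{2}$, and $p\neq p'$) neither diverge nor shift the limit away from $\sigma^{2}_{W}\alpha\T\Sigma_{\mathcal{A}}\alpha$ --- which is precisely why the summability conditions \cref{eq:weak_corr_condition}, $\sum_{d'}\expect X_{d}X_{d'}\le c$, and the convergence-in-probability hypothesis on $X^{i}_{p}(\bb y_D-\bb y^{0}_D)_i$ are imposed --- while simultaneously matching all of the restrictive moment conditions of the exchangeable CLT to the correct normalizations, where the $8$th-moment bound on the weights and the boundedness of the inputs do the real work.
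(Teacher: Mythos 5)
Your proposal is correct and takes essentially the same route as the paper: both treat $\{R_{Ni}\}$ as a row-exchangeable triangular array, verify the same moment conditions (vanishing pairwise covariance at rate $o(1/N)$, convergence of $\expect R_{N1}^2$ and $\expect R_{N1}^2R_{N2}^2$ to $v$ and $v^2$ via uniform integrability from the bounded inputs and the finite 8th weight moment, and a Lyapunov-type higher-moment bound), and invoke the Blum--Chernoff--Rosenblatt--Teicher CLT for interchangeable variables. Your device of conditioning on $\bb w^{*D}-\bb w^{0D}$ to make the rows conditionally i.i.d.\ is a cleaner way to organize the cross-moment computations that the paper instead carries out by direct case analysis over repeated coordinate indices, but it yields the same estimates and the same limit.
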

\begin{proof}
    We assumed that $D$ is a monotonically increasing function of $N$.
    Therefore, we get a triangular array of $R_{Ni}$ with infinitely exchangeable rows. That is, for each $N$ we have a process $R_{Ni}$ with $i=1,\dots$, such that any permutation of indices $i$ doesn't change the joint distribution. This is due to joint weight dependence among labels. Sums of infinitely exchangeable sequences behave similarly to sums of independent sequences when it comes to limiting distributions. In particular, we will use the classic central limit theorem result of \cite{blum1958central}. 

    To apply it, we need to compute several moments of $R_{Ni}$.
    
    \textbf{First moment.} Since $\expect (W_d^* - W_d^0) = 0$, we have $\expect\, R_{Ni}=0$.

    \textbf{Second moment.} Again using $\expect\, (W_d^* - W_d^0) = 0$ and the fact that all weights are i.i.d., so $\expect\, (W_d^* - W_d^0)(W_{d'}^* - W_{d'}^0) = \delta_{ij}$, we get
    \begin{align}
        \sigma^2_R=\expect\, R_{Ni}^2 \,&= \frac{1}{D}\sum_{pp'dd'}\alpha_p\,\alpha_{p'}\,\expect\,\,X^i_{p}X^i_{p'}X^i_{d}X^i_{d'}(W_d^* - W_d^0)(W_{d'}^* - W_{d'}^0)\\
        &=\frac{1}{D}\sum_{pp'd}\alpha_p\,\alpha_{p'}\,\expect\,\,X^i_{p}X^i_{p'}(X^i_{d})^2(W_d^* - W_d^0)^2\,.
    \end{align}
    This term is $O(1)$ due to the $1/D$ scaling. Denoting $\sigma^2_W=\expect\,(W_d^* - W_d^0)^2$ and $\gamma_{pp'}^D = \sum_d \expect\,\,X^i_{p}X^i_{p'}(X^i_{d})^2 / D$ (and the corresponding matrix $\Gamma^D$), we can re-write the variance as
    \begin{equation}
        \sigma^2_R(D) = \sigma^2_W\,\alpha\T\Gamma^D\alpha\,.
        \label{app:eq:sigma_r}
    \end{equation}
    
    \textbf{Absolute 3rd and 4th moments.} For convenience, we first find the 4th moment. Since the inputs are bounded and independent from the weights, and dropping the summation over all $p_i$ and $d_i$ for readability,
    \begin{align}
        \expect\,R_{Nj}^4 \,& = \frac{1}{D^2} \sum \brackets{\prod_{i=1}^4\alpha_{p_i}}\expect\,\brackets{\prod_{i=1}^4X_{p_i}^j\,X_{d_i}^j}\,\expect\,\brackets{\prod_{i=1}^4(W_{d_i}^* - W_{d_i}^0)} = O(1)\,.
    \end{align}
    This follows from $\expect\,\brackets{\prod_{i=1}^4X_{p_i}^j\,X_{d_i}^j}=O(1)$ due to bounded inputs, and due to centered and i.i.d. weights $\expect\,\brackets{\prod_{i=1}^4(W_{d_i}^* - W_{d_i}^0)}= 0$ if any $d_i$ differs from the rest (i.e. only $O(D^2)$ out of $D^4$ are non-zero; these are the ones with $d_1=d_2$ \textit{and} $d_3=d_4$ and permutations of those conditions).

    For the absolute 3rd moment, we can use Cauchy-Schwarz to find its scaling:
    \begin{equation}
        \expect\,|R_{Nj}|^3 = \expect\,|R_{Nj}|\cdot |R_{Nj}|^2 \leq \sqrt{\expect\,|R_{Nj}|^2\,\expect\,|R_{Nj}|^4} = O(1)\,,
    \end{equation}
    since both terms in the square root are $O(1)$.

    \textbf{Covariance.} For two different points $i$ and $j$, defining the covariance $\sigma_{pd} = \expect\,X_p^iX_d^i$,
    \begin{align}
        \expect\,R_{Ni}R_{Nj} \,&= \frac{1}{D}\sum_{pp'dd'}\alpha_p\,\alpha_{p'}\,\expect\,\,X^i_{p}X^j_{p'}X^i_{d}X^j_{d'}(W_d^* - W_d^0)(W_{d'}^* - W_{d'}^0)\\
        &=\frac{1}{D}\sum_{pp'd}\alpha_p\,\alpha_{p'}\,\expect\,\,X^i_{p}X^j_{p'}X^i_{d}X^j_{d}(W_d^* - W_d^0)^2 = \frac{\sigma^2_W}{D}\sum_{pp'd}\alpha_p\alpha_{p'}\sigma_{pd}\sigma_{p'd}\,.
    \end{align}
    Since we required $\sum_d\sigma_{pd}^2\leq c$, $\sum_d \sigma_{pd}\sigma_{p'd} \leq \sqrt{(\sum_d \sigma_{pd}^2)(\sum_d\sigma_{p'd}^2)} \leq c$ by Cauchy-Schwarz. Thus, $\expect\,R_{Ni}R_{Nj} = O(1/D) = o(1/N)$.

    \textbf{Covariance of $R_{Ni}^2$.} Again, take $i\neq j$ and adapt the calculation of the 4th moment:
    \begin{align}
        \expect\, R_{Ni}^2R_{Nj}^2\!=\! \frac{1}{D^2} \sum \!\brackets{\prod_{k=1}^4\alpha_{p_k}}\!\expect\brackets{\prod_{k=1}^2X_{p_k}^i\,X_{d_k}^i}\!\brackets{\prod_{k=3}^4X_{p_k}^j\,X_{d_k}^j}\!\brackets{\prod_{k=1}^4(W_{d_k}^* - W_{d_k}^0)}.
    \end{align}
    Due to i.i.d. weights the whole sum has only $O(D^2)$ terms and therefore the whole expression is $O(1)$. However, we require a more precise control over the sum. 

    The summands split up into three cases (with a potential $O(D)$ overlap): (1) $d_1=d_2=d_3=d_4$, (2) $d_1=d_3,\ d_2=d_4$ or $d_1=d_4,\ d_2=d_3$, and (3) $d_1=d_2,\ d_3=d_4$. The first case has only $D$ terms, each $O(1)$, in the sum over $d_k$. The second case is a bit more tricky. Assuming $d_1=d_3$ w.l.o.g., by Cauchy-Schwarz and by the assumptions on correlations (note that the weights are i.i.d., so we can take the expectation w.r.t. the weights out of the sum),
    \begin{align}
        &\sum_{d_1,d_2} \expect\brackets{X_{p_1}^i X_{p_2}^iX_{d_1}^iX_{d_2}^i}\brackets{X_{p_3}^j X_{p_4}^jX_{d_1}^jX_{d_2}^j}\expect\brackets{(W_{d_1}^* - W_{d_1}^0)^2(W_{d_2}^* - W_{d_2}^0)^2}\\
        &\qquad\qquad\leq \sum_{d_1,d_2}\sqrt{\expect\brackets{X_{p_1}^i X_{p_2}^i}^2}\sqrt{\expect\brackets{X_{p_3}^i X_{p_4}^i}^2}\expect\brackets{X_{d_1}^iX_{d_2}^i}^2\expect\brackets{(W_{d_1}^* - W_{d_1}^0)^2(W_{d_2}^* - W_{d_2}^0)^2}\\
        &\qquad\qquad\leq c\,D\,\sqrt{\expect\brackets{X_{p_1}^i X_{p_2}^i}^2}\sqrt{\expect\brackets{X_{p_3}^i X_{p_4}^i}^2}\expect\brackets{(W_{1}^* - W_{1}^0)^2(W_{2}^* - W_{2}^0)^2}\,.
    \end{align}   
    Since other terms in the full sum are $O(1)$, the contribution of the second case is $O(D)$.

    The last case is what we had for covariance. Since $d_1=d_2$ and $d_3=d_4$, 
    \begin{align}
        &\sum_{d_1,d_3} \expect\brackets{X_{p_1}^i X_{p_2}^i(X_{d_1}^i)^2}\brackets{X_{p_3}^j X_{p_4}^j(X_{d_3})^2}\expect\brackets{(W_{d_1}^* - W_{d_1}^0)^2(W_{d_3}^* - W_{d_3}^0)^2} \\
        &= D(D-1) \gamma^D_{p_1p_2}\gamma^D_{p_3p_4}\sigma_W^4 + D\gamma^D_{p_1p_2}\gamma^D_{p_3p_4}\expect\brackets{W_{1}^* - W_{1}^0}^4\,.
    \end{align}
    Now we can combine all three terms, using that the weights have finite second and 4th moments. We will also use that for $d_1=d_2$ and $d_3=d_4$, we can ignore the second term with $\expect\brackets{W_{1}^* - W_{1}^0}^4$ since it's an $O(D)$ contribution to an $O(D^2)$ sum. Therefore,
    \begin{align}
        \expect\, R_{Ni}^2R_{Nj}^2 \,&= \frac{1}{D^2} \sum_{p_k} \brackets{\prod_{k=1}^4\alpha_{p_k}}\brackets{\sigma_W^4D(D-1) \gamma^D_{p_1p_2}\gamma^D_{p_3p_4} + O(D)}\\
        &= \brackets{\sigma^2_W\,\alpha\T\Gamma^D\alpha}^2 + O\brackets{\frac{1}{D}} = \sigma^4_R(D) + O\brackets{\frac{1}{D}}\,.
    \end{align}

    \textbf{Asymptotic of the second moment.} To find the limit of the variances, we need to compute
    \begin{equation}
        \expect\,\widetilde R_{Ni}^2=\frac{1}{D}\expect\,\sum_{pp'd}\alpha_p\alpha_{p'}\,\,X^i_{p}X^i_{p'}(X^i_{d})^2(W_d^* - W_d^0)^2\,.
    \end{equation}
    This variable is uniformly integrable since its 4th moment exists (almost identical computation to $R_{Ni}^2$) and is uniformly bounded for all $N$ (using \cite{billingsley1999convergence}, immediately after Theorem 3.5 for $\eps=2$). Therefore, $\lim_N\expect\, \widetilde R_{Ni}^2=\expect\,\lim_N \widetilde R_{Ni}^2$ (Theorem 3.5 in \cite{billingsley1999convergence}). The latter is easy to compute, since 
    \begin{align}
        \expect\,\sum_{d}(X^i_{d})^2(W_d^* - W_d^0)^2 \,&=D\,\sigma_W^2\,,\\
        \var\brackets{\expect\,\sum_{d}(X^i_{d})^2(W_d^* - W_d^0)^2} \,& = \sum_{d,d'}\mathrm{Cov}\brackets{(X_d^i)^2(W_d^* - W_d^0)^2,\,(X_{d'}^i)^2(W_{d'}^* - W_{d'}^0)^2}\\
         =O(D)\, + \,&\sum_{d,d'\neq d}\mathrm{Cov}\brackets{(X_d^i)^2(W_d^* - W_d^0)^2,\,(X_{d'}^i)^2(W_{d'}^* - W_{d'}^0)^2}\\
         =O(D)\, + \,&\sigma_W^2\sum_{d,d'\neq d}\mathrm{Cov}\brackets{(X_d^i)^2,\,(X_{d'}^i)^2} = O(D)\,,
    \end{align}
    where in the last line we again used the assumption on summable covariances of $(X_d^i)^2$. 

    Therefore, by Chebyshev's inequality $\frac{1}{D}\sum_{d}(X^i_{d})^2(W_d^* - W_d^0)^2$ (note the returned $1/D$) converges to $\sigma_W^2$ in probability. Therefore, by Slutsky's theorem $\widetilde R_{Ni}^2$ converges to $\sigma_W^2\,\sum_{pp'}\alpha_p\alpha_{p'}\,\,X^i_{p}X^i_{p'}$. Since the expectation of $X^i_p X^i_{p'}$ is $\sigma_{pp'}$, we can denote $\Sigma_{\mathcal{A}}$ as a matrix with entries $\sigma_{pp'}^2$, such that 
    \begin{align}
        \expect\, \widetilde R_{Ni}^2 \,&= \sigma^2_W\,\alpha\T\Gamma^D\alpha \rightarrow \sigma^2_W\,\alpha\T\Sigma_{\mathcal{A}}\alpha\,.
    \end{align}

    Since we assumed that $R_{Ni}$ converges to a random variable, and $R_{Ni}^2$ is uniformly integrable (since its 2nd moment exists), we can claim that $\expect\,  R_{Ni}^2 \rightarrow \sigma^2_W\,\alpha\T\Sigma_{\mathcal{A}}\alpha=\expect\, R^2$.

    For $R_{Ni}^2R_{Nj}^2$, we can repeat the calculation. It's also uniformly integrable since its 8th moment exists: $\expect\,\prod_{k=1}^8(W_{d_k}^* - W_{d_k}^0)$ leaves out $D^4$ terms out of $D^8$, the weights have a finite 8th moment, and the inputs are bounded. Therefore, 
    \begin{align}
        \expect\, R_{Ni}^2R_{Nj}^2 \,&\rightarrow\sigma^4_W\brackets{\alpha\T\Sigma_{\mathcal{A}}\alpha}^2\,.
    \end{align}

    \textbf{CLT for exchangeable arrays.} Assume that $\sigma^2_R(D)=O(1)$ (defined in \cref{app:eq:sigma_r}) has a non-zero limit, and $D$ is large enough for $\sigma^2_R(D) > 0$. Then $G_{Ni} = R_{Ni} / \sigma_R(D)$ is zero mean, unit variance and has a finite third moment. It also satisfies the following:
    \begin{align}
        \expect\,G_{N1}G_{N2} \,&= O\brackets{\frac{1}{D}} = o\brackets{\frac{1}{N}}\,,\\
        \lim_{N} \expect\,G_{N1}^2G_{N2}^2 \,&= 1\,,\\
        \expect\,|G_{N1}|^3 \,&= O(1) = o(\sqrt{N})\,.
    \end{align}
    Therefore, $G_{Ni}$ satisfies all conditions of Theorem 2 in \cite{blum1958central}, and \begin{equation}
        \frac{1}{\sqrt{N}}\sum_{i=1}^N G_{Ni} \longrightarrow_d \NN(0, 1).
    \end{equation}

    Therefore, we also have 
    \begin{equation}
        \frac{1}{\sqrt{N}}\sum_{i=1}^N R_{Ni} \longrightarrow_d \NN(0, \sigma^2_W\,\alpha\T\Sigma_{\mathcal{A}}\alpha).
    \end{equation}

    (Note that in the case $\alpha\T\Sigma_{\mathcal{A}}\alpha=0$, we don't need to apply CLT since the sum converges to zero in probability by Chebyshev's inequality.)
\end{proof}

Now we're ready to prove the main result (see the more informal statement in the main text, \cref{theorem:gaussian_process_main_text}). 
\begin{apptheorem} 
    In the assumptions of \cref{lemma:gaussian_proj,lemma:xhx_to_identity,app:lemma:clt}, the scaled sequence of stochastic processes $\Xi^{D} = \brackets{\bb X}\T\brackets{\bb X^D\,\bb H^D\,(\bb X^D)\T}^{-1}(\bb y_D - \bb y^{0}_D)$ converges to a Gaussian process:
    \begin{equation}
        \widetilde\Xi^{D} = \frac{D\,\expect\, \nabla^2\phi^{-1}(W_d^0 / \sqrt{D})}{\sqrt{N}}\Xi^{D} \rightarrow_d \mathcal{GP}(0, \sigma_W^2\Sigma)\,,
    \end{equation}
    where $\Sigma$ is the covariance matrix over $X_d$ and $\sigma^2_W=\expect\,(W_d^* - W_d^0)^2$.
\label{app:theorem:convergencd_to_gp}
\end{apptheorem}
\textbf{Remark.} \Cref{lemma:xhx_to_identity} used the following correlation assumptions: $\sum_{d'=1}^\infty \brackets{\expect\, X_d\,X_{d'}}^2 \leq c$ and $\sum_{d'=1}^\infty \left|\mathrm{Cov}\brackets{X_d^2,\,X_{d'}^2} \right| \leq c$. \Cref{lemma:gaussian_proj} also used $\sum_{d'=1}^\infty \expect\, X_d\,X_{d'} \leq c$. In terms of \cref{theorem:gaussian_process_main_text} in the main text, the second condition corresponds to $c_{ij}'$. The first and the third conditions are joined into the $c_{ij}$ condition for convenience since it implies both conditions used here.
\begin{proof}
    First, for an $\mathcal{A},\,\alpha$-projection of the scaled stochastic process, we split the solution into two terms ($\bb A$ is defined in \cref{eq:a_in_inverse_def}):
    \begin{align}
        &\sum_{p\in\mathcal{A}}\alpha_p\,\widetilde\Xi^{D}_p = \frac{1}{\sqrt{N}}\sum_{p\in \mathcal{A}}\alpha_p\brackets{\bb X_{:\,p}}\T\bb A^{-1}(\bb y_D - \bb y^{0}_D)\\ &\quad= \frac{1}{\sqrt{N}}\sum_{p\in \mathcal{A}}\alpha_p\brackets{\bb X_{:\,p}}\T(\bb y_D - \bb y^{0}_D)+
        \frac{1}{\sqrt{N}}\sum_{p\in \mathcal{A}}\alpha_p\brackets{\bb X_{:\,p}}\T\brackets{\bb A^{-1} - \bb I_N}(\bb y_D - \bb y^{0}_D)\,.
    \end{align}

    We proved convergence of the first term to a Gaussian in \cref{app:lemma:clt}. By \cref{lemma:gaussian_proj}, the second term converges to zero in probability. By Slutsky's theorem, since the first term converges to a Gaussian in distribution, and the second one to zero in probability, the whole expression converges to a Gaussian.

    \textbf{Convergence to a Gaussian process.} We've shown that for any scalar weights $\alpha$ and a finite index set $\mathcal{A}\subset \mathbb{Z}_{>0}$, the projection of the sequence of the scaled stochastic processes $\widetilde\Xi^{D}$ w.r.t. $\mathcal{A}\,,\alpha$ converges to a Gaussian in distribution (with parameters found in \cref{app:lemma:clt}). Therefore, $\widetilde\Xi^D$ converges to a Gaussian process in distribution by Lemma 6 of \cite{matthews2018gaussian}.
\end{proof}

The teacher weights setup with $y_D = \sum_{d=1}^{D} \frac{1}{\sqrt{D}} X_d\,W^*_d$ simplifies the proofs, but it is not necessary. Here we provide a similar result for a generic $y$ without a detailed proof:
\begin{apptheorem} 
    In the assumptions of \cref{lemma:gaussian_proj,lemma:xhx_to_identity,app:lemma:clt}, additionally assume that labels $y^i$ are zero-mean, bounded, and independent of $D$, $X_d^i$ and $W^0_d$, and also that $W^0_d$ are zero-mean. Then 
    the scaled sequence of stochastic processes $\Xi^{D} = \brackets{\bb X}\T\brackets{\bb X^D\,\bb H^D\,(\bb X^D)\T}^{-1}(\bb y - \bb y^{0}_D)$ converges to a Gaussian process:
    \begin{equation}
        \frac{D\,\expect\, \nabla^2\phi^{-1}(W_d^0 / \sqrt{D})}{\sqrt{N}}\Xi^{D} \rightarrow_d \mathcal{GP}(0, (\sigma_y^2 + \sigma_{W^0}^2)\Sigma)\,,
    \end{equation}
    where $\Sigma$ is the covariance matrix over $X_d$ and $\sigma^2_{W^0}=\expect\, (W_d^0)^2,\,\sigma^2_y=\expect\,y^2$.
\label{app:theorem:convergencd_to_gp_labels}
\end{apptheorem}
\begin{proof}
    The label setup doesn't affect \cref{lemma:xhx_to_identity}. For \cref{lemma:gaussian_proj,app:lemma:clt}, due to the independence assumption for the labels and the weights, the proofs of the lemmas split into moment calculations for $W_d^0$ and $y$. The former is identical to the original calculation for $W_d^*-W_d^0$, since the weights are zero-mean. The latter doesn't involve sums over repeating coordinates, simplifying moment computations. The overlap is similar to the computation for $W_d^0$. With this change in the lemmas, the rest of the proofs is the same as for \cref{app:theorem:convergencd_to_gp}.  
\end{proof}

\subsection{Rich and lazy regimes}
\label{app:rich_lazy}
Assuming for simplicity that all initial weights are the same and positive ($w_d^0= w^0= \alpha / \sqrt{D}$) and that $\sigma^2_\lambda$ is equal to 1, then for $\xi \sim \NN(0,1)$, \cref{eq:w_ing:x_lambda} becomes
\begin{equation}
    \nabla\phi(w^{\infty}) \approx \nabla\phi(w^0) + \frac{\sqrt{N}}{D\,\nabla^2\phi^{-1}(w^0)}\,\xi\,.
\end{equation}
For \textbf{$p$-norms}, $\phi(w)=\frac{1}{p}w^p; \nabla\phi(w^0)=(w^0)^{p-1}; \nabla^2\phi^{-1}(w^0) = \frac{1}{p-1}(w^0)^{2-p}$. Therefore,
\begin{equation}
    \nabla\phi(w^{\infty}) \approx \frac{\alpha^{p-1}}{D^{(p-1)/2}} + \frac{(p-1)\sqrt{N}\alpha^{p-2}}{D^{p/2}}\,\xi = \frac{\alpha^{p-2}}{D^{p/2}}\brackets{\alpha\,\sqrt{D} + (p-1)\,\sqrt{N}\,\xi}\,.
\end{equation}
As long as $\alpha \gg \sqrt{N/D}$, the initial weights (first term) will be much larger than the weight change (second term), and we will be in the lazy regime. This additionally sets a limit on the dataset size. Put another way, for $\alpha=1$ and the standard ``lazy'' initialization with $O(\sqrt{1/D})$ weights, the number of data points $N$ has to be much smaller than $D$ for the linearization to make sense (in addition to the more technical assumption on the $N$ scaling in \cref{theorem:gaussian_process_main_text}).

For \textbf{negative entropy}, $\phi(w) = w\log w; \nabla\phi(w^0) = 1 + \log w^0; \nabla^2\phi^{-1}(w^0) = w^0$. Therefore,
\begin{equation}
    \nabla\phi(w^{\infty}) \approx 1 + \log\alpha - \frac{1}{2}\log D + \frac{\sqrt{N}}{\alpha\sqrt{D}}\,\xi\,.
\end{equation}
Unlike for $p$-norms, the initial weights' contribution grows to infinity even as $\alpha$ tends to zero. As a result, we require only $\alpha \gg \sqrt{N/(D\,\log D)}$ to keep learning in the lazy regime. However, for $\alpha=1/\sqrt{D}$ (i.e. $O(1/D)$ weights, the standard rich regime scaling) negative entropy still results in large weight changes since $N$ grows with $D$.

Overall, different potentials result in different asymptotic boundaries of the lazy regime. 
Negative entropy allows smaller initialization schemes while maintaining the lazy regime.

\section{Experimental details}
\label{app:sec:experiments}

\subsection{Linear regression}
For experiments in \cref{sec:linreg}, we optimized models using mirror descent, where the gradients were computed in the standard SGD way with momentum of 0.9 and no weight decay (with mixed precision). Learning rate was changed during learning according to the cosine annealing schedule. The initial learning rate was chosen on a single seed via grid search over 16 points (log10-space from 1e-7 to 1e-1) and 100 epochs. If the minimum MSE loss was larger than 1e-3, we increased the number of epochs by 100 and repeated the grid search (up to 500 epochs; all grid searches converged to MSE smaller than 1e-3). The widths were chosen as 10 points in a log10-space from 1e2 to 2e4. The full results are shown in \cref{fig:linreg_convergence_appendix}.

\begin{figure}[t]
    \centering
    \includegraphics[width=\textwidth]{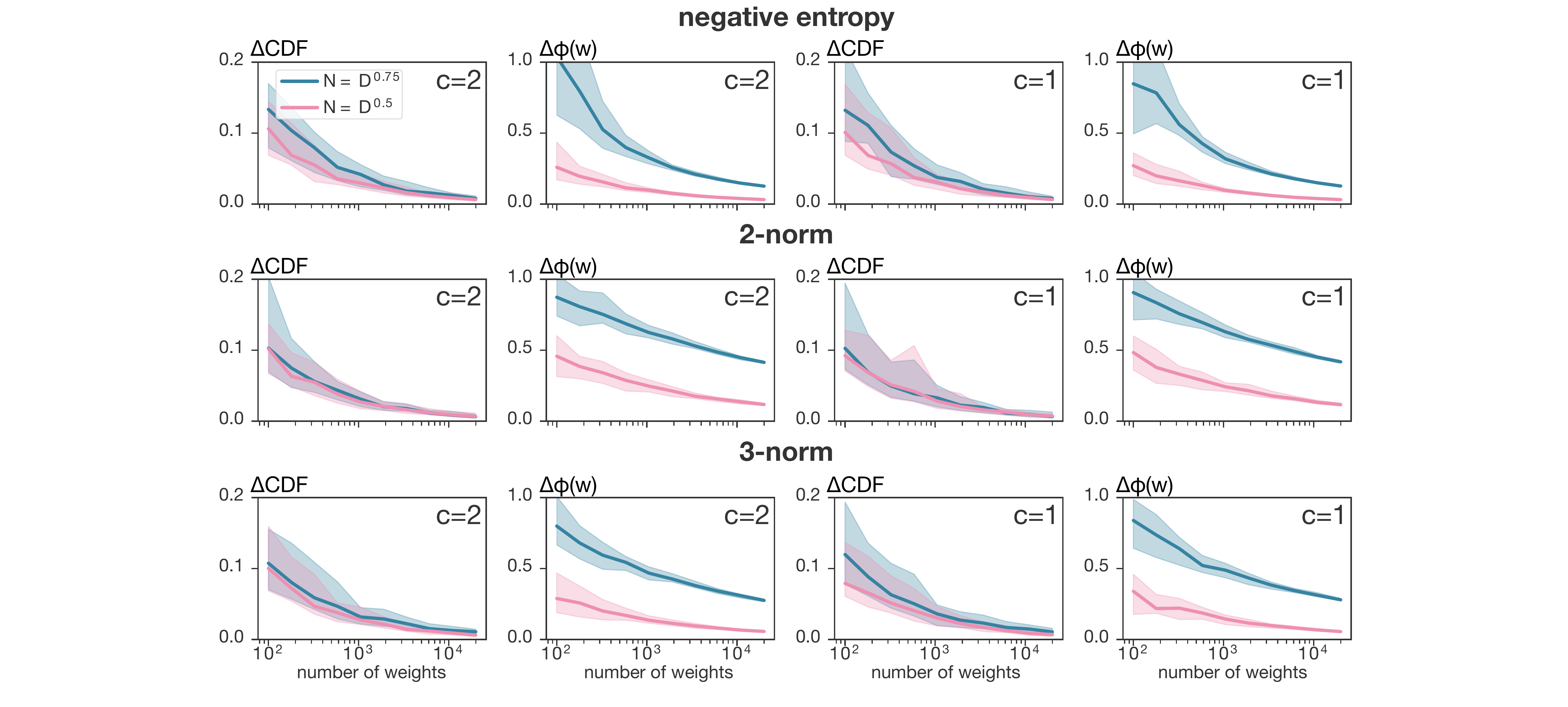}
    \caption{Linear regression solutions for negative entropy, 2-norm and 3-norm potentials. First and third columns: integral of absolute CDF difference ($\Delta\mathrm{CDF}$) between normalized uncentered weights and $\NN(0, 1)$; second and third columns: magnitude of weight changes relative to the initial weights in the dual space ($\Delta\phi$). First two columns: quickly decaying correlation between $x_i$ ($c=2$); last two columns: slowly decaying correlation ($c=1$). All plots: 30 seeds, median value (solid line) and 5/95\% percentiles (shaded area), $N=D^{0.5}$ (orange) and $N=D^{0.75}$ (blue).}
    \label{fig:linreg_convergence_appendix}
\end{figure}

\subsection{Potential robustness}

In \cref{sec:robust_potential}, both figures in \cref{fig:linreg_convergence}B were obtained in the following way. For $D=100$ and 30 seeds, we sampled a 1000-dimensional weight sample $\w^0$ from either $\NN(0, 1/D)$ or a log-normal with $\mu=-\log\sqrt{2D},\,\sigma^2=\log 2$ multiplied by a $\pm1$ with equal probability (so it has the approximately the same variance as the Gaussian). We then computed the final weights $\w$ for the potential $\phi$ as $\w = \nabla\phi^{-1}\brackets{\nabla\phi(\w^0) + 0.1\,\nabla\phi(1/\sqrt{D})\,\NN(0, 1)}$, so the Gaussian change had a smaller magnitude than a weight with a standard deviation of $1/\sqrt{D}$.

\subsection{Finetuning}

In \cref{sec:finetuning}, for $N$ chosen datapoints, the dataset was randomly sampled from the 50k available points for each seed (out of 10). Networks were trained on the cross-entropy loss using stochastic mirror descent; the gradients had momentum of 0.9 but no weight decay, batch size of 256. Learning rate was changed during learning according to the cosine annealing schedule. The initial learning rate was chosen on a log10 grid of 5 points from 1e-5 to 1e-1. The initial number of epochs was 30, but increased by 30 up to 4 times if the accuracy was less than 100\%. All resulting accuracies from this search were $\geq99\%$. (Note that this is the desired behavior, since we're testing the models that fully learned the train set.) The dataset was not augmented; all images were center cropped to have a resolution of 224 pixels and normalized by the standard ImageNet mean/variance. We trained networks with mixed precision and FFCV \cite{leclerc2022ffcv}; for the latter we had to convert ImageNet to the FFCV format (see the accompanying code). 

The magnitude of weight changes was usually smaller than $0.2$%
, although for 2-norm for ResNet18/34 and CORnet-S, and 3-norm for EfficientNets b0,1,3 it reached 0.3-0.5. For negative entropy, the magnitude always stayed smaller than 0.02.

We have conducted experiments with recurrent neural networks trained on row-wise sequential MNIST \cite{lecun2010mnist}. The networks were trained on the train set, and then finetuned on a subset of the test set (same procedure as for deep networks) for $N=D^{0.5}$ (the number of weights scales quadratically with the hidden size, so $N$ equals the number of hidden units). The Gaussian fits, and non-Gaussianity of weight changes measured with a wrong potential were remarkably similar to our theoretical predictions and behavior of deep networks (\cref{fig:rnn_simsiam}A). Note that for networks trained with $p$-norms but evaluated with negative entropy, if the weight flips its sign during training we can immediately conclude that negative entropy was not used for training. Because we're finetuning already trained networks, most weights don't flip signs. We plot the distribution of changes in the dual space for all weights for simplicity (we can do it in the code since the dual weights are signed).

We also evaluated a self-supervised loss from SimSiam \cite{chen2021exploring}: for a pre-trained ResNet50 (see the full architecture in \cite{chen2021exploring}), we finetuned it with the same self-supervised loss $l=0.5\,\bb z_1\T\bb p_2 + 0.5\,\bb z_2\T\bb p_1$ for positive pairs 1/2 passed through decoders with or without stop-gradient ($\bb z$ and $\bb p$). We trained the network for 50 epochs and used the learning rate that minimized the loss the best, since we don't evaluate accuracy. We evaluated weight changes in the encoder (a ResNet50). The results were very similar to networks finetuned with a supervised loss, confirming our results for supervised experiments (\cref{fig:rnn_simsiam}B).

\begin{figure}[t]
    \centering
    \includegraphics[width=\textwidth]{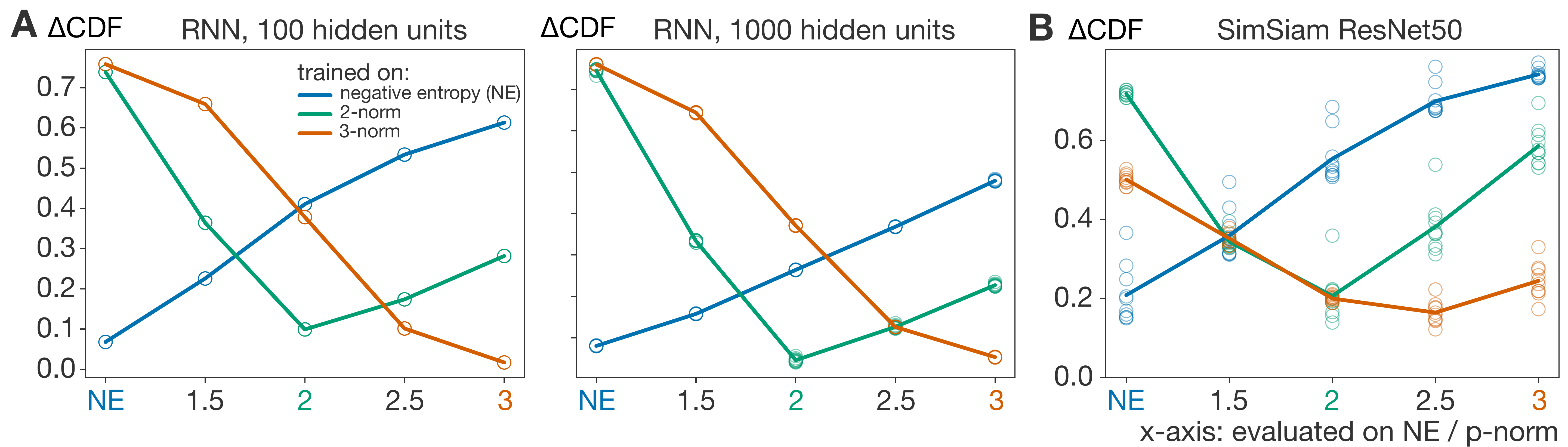}
    \caption{Same setup as \cref{fig:finetuned_convergence}B. \textbf{A.} Recurrent neural networks (RNNs) with 100/1000 hidden units trained and then finetuned on row-wise sequential MNIST. \textbf{B.} SimSiam ResNet50 finetuned with the same self-supervised loss.}
    \label{fig:rnn_simsiam}
\end{figure}

\subsection{Comparison to synaptic data}
\label{app:sec:data}
In \cref{sec:actual_neural_data}, we used the data from \cite{dorkenwald2022binary}. We used the \href{https://www.microns-explorer.org/phase1}{the accompanying dataset}, specifically the ``spine\_vol\_um3'' data  \href{https://zenodo.org/record/3710459/files/soma_subgraph_synapses_spines_v185.csv?download=1}{available here}. The dataset contained 1961 spine volume recordings. 

For negative entropy fitting (\cref{fig:data_plots}A), we sampled the initial weights as $10^{-1.42}$ with probability $p=0.77$ and $10^{-0.77}$ with $p=0.23$. In the dual space, we added a centered Gaussian with $\sigma=10^{-0.23}$. The means and probabilities exactly match the ones fitted in \cite{dorkenwald2022binary} (Table 2); the standard deviations were fitted as $10^{-0.24}$ and $10^{-0.22}$; we used a log-average since the standard deviations were very similar and we expect the Gaussian parameters to be the same across weights.

For 3-norm (\cref{fig:data_plots}B), we obtained a qualitative fit. With equal probability, the initial weights were either 1e-2 or log-normal with $\mu=-2.7$ and $\sigma=0.9$. In the dual space, the Gaussian change was zero-mean with $\sigma=0.0012$. A small part of the final weights became negative with this sampling procedure, which we treated as synapses that were reduced to zero weight during learning and therefore we excluded them from the plots. 

The 3-norm approximately fits the data because for a small constant initial weight $c$, the final weight will be approximately log-normal: $\log\brackets{\nabla\phi^{-1}(\nabla\phi(c) + \xi)}\approx\log\brackets{c + \xi / \nabla^2\phi(c)}\approx \xi / \nabla^2\phi(c)$. The second part of the mixture is already log-normal and is barely changed by the Gaussian change, resulting in a mixture of two (approximately) log-normals.

\section{Notes on mirror descent}
\label{app:sec:md_notes}

For negative entropy, we assumed that the weights do not change signs. This way, $|w|\log|w|$ is strictly convex for either $w>0$ or $w<0$ and the gradient of potential is invertible. This is implied by the exponentiated gradient update \cref{eq:eg_explicit_update}, which indeed preserves the signs of the weights.

Powerpropagation \cite{schwarz2021powerpropagation} reparametrized network's weights $\w$ as $\w=\theta\,|\theta|^{\alpha-1}$, such that gradient descent is done over $\theta$. The original weights are therefore updated for a loss $L$ as (assuming all terms remain positive for simplicity):
\begin{align}
    \w^{t+1} \,&= \brackets{\theta^t - \eta\parderiv{L}{\w^t}\parderiv{\w^t}{\theta^t}}^{
    \alpha}\approx (\theta^t)^\alpha - \alpha\,(\theta^t)^{\alpha - 1}\,\eta\,\parderiv{L}{\w^t}\parderiv{\w^t}{\theta^t}\\
    &= (\theta^t)^\alpha - \alpha^2\,(\theta^t)^{2\alpha - 2}\,\eta\,\parderiv{L}{\w^t} = \w^t - \alpha^2\,\eta\,(\w^t)^{\frac{2(\alpha - 1)}{\alpha}}\parderiv{L}{\w^t}\,,
\end{align}
where we linearized the update in the first line and then used the relation between $\theta$ and $\w$. If we treat $(\w^t)^{\frac{2(\alpha - 1)}{\alpha}}$ as the Hessian of $\phi^{-1}$ in mirror descent (see \cref{eq:winf_approx}), powerpropagation can be viewed as an approximation to mirror descent with $p$-norms, where $p=2/\alpha$. For $\alpha=2$, we can also treat powerpropagation as an approximation to exponentiated gradient (\cref{eq:eg_explicit_update}).

\end{document}